\newtheorem{assumption}{Assumption}
\newtheorem{remark}{Remark}
\newtheorem{lemma}{Lemma}
\newtheorem{theorem}{Theorem}
\newtheorem{corollary}{Corollary}
\def\BibTeX{{\rm B\kern-.05em{\sc i\kern-.025em b}\kern-.08em
    T\kern-.1667em\lower.7ex\hbox{E}\kern-.125emX}}
\begin{document}
\title{A Robust Data-Driven Iterative Control Method for Linear Systems with Bounded Disturbances}
\author{Kaijian Hu, Tao Liu %
\thanks{The work was supported by the Research Grants Council of the Hong Kong Special Administrative Region under the General Research Fund Through Project No. 17209219 and the National Natural Science Foundation of China through Project No. 62173287.}
\thanks{K. Hu and T. Liu are with the Department of Electrical and Electronic Engineering, The University of Hong Kong, Hong Kong SAR, and the HKU Shenzhen Institute of Research and Innovation, Shenzhen, China. (e-mail: kjhu@eee.hku.hk; taoliu@eee.hku.hk).}
}

\maketitle

\begin{abstract}
    This paper proposes a new robust data-driven control method for linear systems with bounded disturbances, where the system model and disturbances are unknown. Due to disturbances, accurately determining the true system becomes challenging using the collected dataset. Therefore, instead of designing controllers directly for the unknown true system, an available approach is to design controllers for all systems compatible with the dataset. To overcome the limitations of using a single dataset and benefit from collecting more data, multiple datasets are employed in this paper. Furthermore, a new iterative method is developed to address the challenges of using multiple datasets. Based on this method, this paper develops an offline and online robust data-driven iterative control method, respectively. Compared to the existing robust data-driven controller method, both proposed control methods iteratively utilize multiple datasets in the controller design process. This allows for the incorporation of numerous datasets, potentially reducing the conservativeness of the designed controller. Particularly, the online controller is iteratively designed by continuously incorporating online collected data into the historical data to construct new datasets. Lastly, the effectiveness of the proposed methods is demonstrated using a batch reactor.
\end{abstract}

\begin{IEEEkeywords}
    Data-Driven $H_\infty$ Control, Iterative Method, Linear Systems, Bounded Disturbances
\end{IEEEkeywords}

\section{Introduction}
\IEEEPARstart{D}{irect} data-driven control (DDC) as an alternative to the traditional model-based control method is becoming popular in the control community\cite{VanWaarde2023}. This trend is driven by the difficulties in accurately modeling the system and the ease of collecting system data\cite{BerberichJulian2021LtMf}. Instead of relying on precise system models, DDC approaches focus on learning controllers directly from system data. In recent times, many DDC methods have been developed, including explicit feedback control \cite{DePersisClaudio2020FfDC, RotuloMonica2021Olod, FormentinSimone2016DloL, LuppiAlessandro2022Odso, Hu2023, Nortmann2023, vanWaardeHenkJ2020DIAN, vanWaardeHenkJ2020Fndt}, feedback linearized control\cite{Alsalti2023, DePersis2023L}, model-free adaptive control\cite{Hou2019}, model-free reinforcement learning control \cite{BUSONIU20188, Lopez2023}, data-driven optimal control\cite{Pan2023, Dörfler2023, CoulsonJeremy2018DPCI}, data-driven predictive control\cite{ BerberichJulian2021DMPC, Liu2023d, hu2024robust}, and iterative learning control\cite{Ahn2007}. These methods are applied to a wide range of systems, such as linear systems\cite{BerberichJulian2021DMPC, Liu2023d, CoulsonJeremy2018DPCI}, linear parameter-varying systems\cite{FormentinSimone2016DloL, Nortmann2023}, and switched linear systems\cite{RotuloMonica2021Olod, Hu2023}.

One of the critical challenges of the DDC methods is designing controllers for unknown systems subject to unknown disturbances\cite{vanWaardeHenkJ2020Fndt,hu2022}. This paper addresses this challenge in the context of linear time-invariant (LTI) systems. Due to disturbances, the pre-collected data of the system is difficult to determine its true model. Consequently, it is challenging to design a controller directly for the true system using the pre-collected data. Given this issue, a robust DDC method is developed using the matrix S-lemma in \cite{vanWaardeHenkJ2020Fndt}. This approach first constructs a system set encompassing all possible systems compatible with the pre-collected data. Subsequently, a controller is designed for each system within this set. Since the true system is included in the system set, the designed controller is also applicable to the true system. Based on this concept, several robust DDC controllers have been developed, such as $H_\infty$ control \cite{SteentjesTomRV2022Odci}, gain scheduling control\cite{Miller2023}, and tracking control\cite{Trentelman2022}.

Another crucial consideration when implementing DDC methods is the amount of data required. When dealing with LTI systems without disturbances, the pre-collected data is typically sufficient for DDC methods by satisfying the persistently exciting (PE) condition of a sufficiently high order \cite{DePersisClaudio2020FfDC, BerberichJulian2021DMPC, Hu2023}. However, this issue becomes more complex when disturbances are present in LTI systems. The existence of disturbances results in different datasets yielding different system sets, each potentially containing varying numbers of systems. Consequently, controllers designed based on these distinct system sets exhibit different control performances. Intuitively, using a larger amount of data can improve the control performance of the designed controller. However, methods utilizing a single dataset, such as \cite{vanWaardeHenkJ2020Fndt, SteentjesTomRV2022Odci}, may design a more conservative controller when more data is employed. Given this issue, multiple datasets are introduced in \cite{hu2022}. This paper constructs a minor system set by intersecting all system sets, each containing all systems compatible with a single dataset. Then, a robust DDC is designed only for all systems within the intersection set, which can reduce the conservativeness of the designed controller. Although this method does not answer how much data is required, it provides a method to make the designed controller achieve better control performance when more data is collected. 

To apply the method proposed in \cite{hu2022}, it is necessary to incorporate multiple datasets into a single linear matrix inequality (LMI), with each dataset requiring a decision variable. However, as the number of datasets increases, the size of the LMI also grows, increasing computational complexity. Additionally, if a new dataset is added, the LMI must be resolved with all previous and new datasets, further increasing the computational burden. Hence, the challenge of efficiently utilizing numerous datasets for the controller design remains unresolved.

On the other hand, most DDC methods primarily rely on pre-collected data to design controllers for unknown systems. However, in practical scenarios, large amounts of data can be collected online during the control process. Leveraging this data can potentially improve the designed controller's control performance. Recently, several DDC methods have emerged that incorporate online data, such as data-driven predictive control for LTI systems\cite{CoulsonJeremy2018DPCI, BerberichJulian2021DMPC, Liu2023d} and online learning control for switched linear systems\cite{RotuloMonica2021Olod}. The former preserves the pre-collected data that describes the system's behavior, while the latter solely relies on online data for describing the system's behavior. Moreover, these methods are primarily designed for systems without considering disturbances. Therefore, combining online and pre-collected data to design a controller for LTI systems with disturbances is an open issue.  

To address these issues, this paper first presents a new iterative method to solve the LMI with numerous datasets. This method allows for considering multiple datasets without introducing additional decision variables. Based on this iterative method, this paper aims to develop an offline and online data-driven $H_\infty$ control method for unknown LTI systems with bounded disturbances, respectively. The offline control method is designed for cases where only the controller designed at the end of the iteration is applied to the controlled system. In contrast, the online control method is designed for scenarios where the controller designed at each iteration is applied immediately. Notably, during the online control process, online data are continuously collected to construct new datasets. 

The remainder of this paper is organized as follows. Section \ref{sc-problemfor} describes the problem to be solved. Section \ref{sc-iterMethod} introduces a new iterative method. Based on this method, three different offline data-driven $H_\infty$ controllers are designed using the pre-collected data in Section \ref{sc-offlineControl}. Furthermore, Section \ref{sc-onlineControl} proposes an online data-driven $H_\infty$ controller using the online collected data. The effectiveness of the proposed methods is illustrated by a batch reactor in Section \ref{sc-caseStudy}. Finally, Section \ref{sc-conclusion} gives some concluding remarks.

\textbf{Notation:} Let $\mathbb{C}$, $\mathbb{R}$ $\mathbb{Z}$, and $\mathbb{N}$ represent the set of complex numbers, real numbers, integers, and natural numbers, respectively. Let $I_{n}$ denote the identity matrix of size $n\times n$, and $0_{n\times m}$ denote the zero matrix of size $n\times m$. Both subscripts can be omitted if they are clear from the context. For a matrix $M$, $M^\top$ denotes its transpose, $M^{-1}$ denotes its inverse when it is non-singular, $M^\dagger$ represents its pseudo-inverse, and $M<0$ ($M\leq 0$) indicates that it is negative (semi-)definite. For a collection of matrices $M_i$, $i=1,\dots,s$, $\text{diag}(M_1,\dots,M_s)$ represents the block diagonal matrix. For a signal ${z}(k):\mathbb{Z}\to \mathbb{R}^n$, define $\|z(k)\|_2$ as its $2$-norm, and ${z}_{[k,k+T]}=[{z}(k)^\top , {z}(k+1)^\top , \dots , {z}(k+T)^\top]^\top$, where $k\in\mathbb{Z}$ and $T\in\mathbb{N}$.


\section{Preliminaries and Problem Formulation}\label{sc-problemfor}
Consider a discrete-time multiple-input and multiple-output LTI system
\begin{subequations}\label{eq-system}
\begin{align}
x(k+1) &= A_t x(k)+B_t u(k)+E_t \omega(k),\label{eq-systema}\\
y(k)&=C_t x(k)+D_t u(k)+G_t \omega(k),\label{eq-systemb}
\end{align}
\end{subequations}
where $u(k)\in\mathbb{R}^m$ is the system input, $x(k)\in\mathbb{R}^n$ is the system state, $y(k)\in\mathbb{R}^p$ is the system output, and $\omega(k)=[\omega_p(k)^\top,\omega_m(k)^\top]^\top\in\mathbb{R}^r$ is system disturbance with $\omega_p(k)\in\mathbb{R}^n$, $\omega_m(k)\in\mathbb{R}^p$, and $r=n+p$. In addition, $A_t\in\mathbb{R}^{n\times n}$, $B_t\in\mathbb{R}^{n\times m}$, $C_t\in\mathbb{R}^{p\times n}$, $D_t\!\in\!\mathbb{R}^{p\times m}$, $E_t=[I_n,0_{n\times p}]\in\mathbb{R}^{n\times r}$, and $G_t=[0_{p\times n},I_p]\in\mathbb{R}^{p\times r}$ are the system matrices. 

Without loss of generality, we adopt the following two assumptions for the system (\ref{eq-system}), which are commonly used in the design of robust data-driven controllers\cite{vanWaardeHenkJ2020Fndt,hu2022}.
\begin{assumption}\label{as-as1}
    The dimensions of the system matrices $A_t$, $B_t$, $C_t$, and $D_t$ are known, i.e., $m$, $n$, and $p$ are known. However, the entries of these matrices are unknown.
\end{assumption}
\begin{assumption}\label{as-as3}
    The disturbance input $\omega(k)$ is bounded satisfying $\omega(k)\omega(k)^\top\leq\Upsilon$, where $\Upsilon\in\mathbb{R}^{r\times r}$ is a positive definite matrix.
\end{assumption}

Due to disturbances, multiple systems can generate the pre-collected data, making it challenging to directly design a controller for the true system based on the data. Instead, an alternative approach is to design a controller for all systems compatible with the pre-collected data. To accomplish this, a robust DDC method was recently designed in \cite{vanWaardeHenkJ2020Fndt} for LTI systems with bounded disturbances using the following matrix S-lemma.
\begin{lemma}[\cite{vanWaardeHenkJ2020Fndt}]\label{le-slemma}
    Let $M, N \in \mathbb{R}^{(s+l) \times(s+l)}$ be symmetric matrices. Suppose that there exists a matrix $\bar{Z}\in \mathbb{R}^{l \times s}$ such that $[I_l, \bar{Z}]N[I_l, \bar{Z}]^{\top}<0$. $[I_l, Z] M[I_l, Z]^{\top} \leq 0$ for all $Z \in \mathbb{R}^{l \times s}$ such that $[I_l, Z]$ $N [I_l, Z]^{\top} \leq 0$ if and only if there exists a scalar $\tau \geq 0$ such that 
    \begin{align}
        M-\tau N \leq 0.
    \end{align}
\end{lemma}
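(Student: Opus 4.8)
The plan is to prove the nontrivial "only if" direction, since the "if" direction is immediate: if $M-\tau N\leq 0$ for some $\tau\geq 0$, then for any $Z$ with $[I_l,Z]N[I_l,Z]^\top\leq 0$ we get $[I_l,Z]M[I_l,Z]^\top\leq \tau[I_l,Z]N[I_l,Z]^\top\leq 0$. For the converse, I would recast the quadratic-matrix-inequality hypothesis as a statement about vectors. Introduce the homogeneous cone $\mathcal{C}_N=\{w\in\mathbb{R}^{s+l}: w^\top N w\leq 0\}$ and similarly $\mathcal{C}_M$. The first step is a lifting lemma: the condition "$[I_l,Z]M[I_l,Z]^\top\leq 0$ for all $Z$ with $[I_l,Z]N[I_l,Z]^\top\leq 0$" is equivalent to the purely vectorial implication "$w^\top N w\leq 0\ \Rightarrow\ w^\top M w\leq 0$ for all $w\in\mathbb{R}^{s+l}$", i.e. $\mathcal{C}_N\subseteq\mathcal{C}_M$. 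One inclusion is trivial (take $w$ of the form $[I_l,Z]^\top v$, which covers exactly the vectors whose top $l$ block, after suitable partition, is unconstrained). For the other, given an arbitrary $w=\begin{bmatrix}w_1\\ w_2\end{bmatrix}$ with $w^\top N w\leq 0$, one must produce a $Z$ and a vector $v$ with $[I_l,Z]^\top v = w$; this works verbatim when $w_1\neq 0$ by setting $v=w_1$ and choosing $Z$ so that $Zw_1=w_2$, and the boundary case $w_1=0$ is handled by a continuity/limiting argument using the strict-feasibility hypothesis $[I_l,\bar Z]N[I_l,\bar Z]^\top<0$.

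The second and main step is the scalar S-lemma with its lossless S-procedure: given symmetric $M,N$ of size $s+l$ with $w^\top \bar N w<0$ for some $\bar w$ (strict feasibility, inherited from $[I_l,\bar Z]N[I_l,\bar Z]^\top<0$) and the implication $w^\top N w\leq 0\Rightarrow w^\top M w\leq 0$, there exists $\tau\geq 0$ with $M-\tau N\leq 0$. I would prove this via a separating-hyperplane / convexity argument: consider the set $\mathcal{S}=\{(w^\top M w,\ w^\top N w): w\in\mathbb{R}^{s+l}\}\subseteq\mathbb{R}^2$ and invoke Dines's theorem that the joint numerical range of two real quadratic forms is convex. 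The hypothesis says $\mathcal{S}$ does not meet the open third quadrant-ish region $\{(a,b): a>0,\ b\leq 0\}$; combined with convexity of $\mathcal{S}$ and strict feasibility (which guarantees a point with second coordinate strictly negative), a separating line through the origin yields scalars, and after normalization one extracts $\tau\geq 0$ with $M-\tau N\leq 0$. The strict-feasibility assumption is exactly what rules out the degenerate separating line that would force $\tau=\infty$, i.e. it ensures the multiplier on $M$ can be taken positive.

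The step I expect to be the main obstacle is the lifting lemma's boundary case $w_1=0$: when the "top block" of $w$ vanishes one cannot directly write $w=[I_l,Z]^\top v$ with $v\neq 0$, so the reduction from the matrix inequality to the vector implication is not purely algebraic there. The fix is to perturb, $w^{(\varepsilon)}=w+\varepsilon\,[I_l,\bar Z]^\top v_0$ for a suitable fixed $v_0$, use that $[I_l,\bar Z]N[I_l,\bar Z]^\top<0$ keeps the perturbed vectors inside (or controllably near) $\mathcal{C}_N$, apply the already-established implication for $\varepsilon>0$, and let $\varepsilon\to 0$; continuity of $w\mapsto w^\top M w$ closes the argument. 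A secondary technical point is verifying Dines's convexity result is applicable (it holds unconditionally for two real forms in dimension $\geq 2$), and that the normalization of the separating functional is legitimate — both are standard but must be stated carefully. Once these two pieces are in place, combining them gives the claimed equivalence, and I would remark that this is precisely the argument in \cite{vanWaardeHenkJ2020Fndt}.
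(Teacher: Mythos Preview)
The paper does not give its own proof of this lemma; it is quoted verbatim from \cite{vanWaardeHenkJ2020Fndt}. So there is no in-paper argument to compare against, but your outline has a genuine gap that is worth naming.

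The problem is in your lifting lemma, specifically the direction ``matrix condition $\Rightarrow$ vector condition''. Given $w=(w_1,w_2)$ with $w_1\neq 0$ and $w^\top N w\leq 0$, you pick some $Z$ with $Z^\top w_1=w_2$ so that $w=[I_l,Z]^\top v$ for $v=w_1$. But to invoke the matrix hypothesis on this $Z$ you need the \emph{full} matrix inequality $[I_l,Z]N[I_l,Z]^\top\leq 0$: the form $N$ must be nonpositive on the entire $l$-dimensional subspace $\mathrm{range}([I_l,Z]^\top)$, not merely at the single vector $w$. Your construction gives no control over the other directions in that subspace, and for a generic $Z$ solving $Z^\top w_1=w_2$ the constraint $[I_l,Z]N[I_l,Z]^\top\leq 0$ will fail. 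The boundary case $w_1=0$ that you single out as the main obstacle is comparatively minor; the substantive difficulty is already present when $w_1\neq 0$.

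The equivalence you assert is in fact true under the Slater assumption, but establishing it directly is essentially the content of the matrix S-lemma itself (one clean way to see the equivalence is to show that both conditions are equivalent to $\exists\,\tau\geq 0:\,M-\tau N\leq 0$, which is circular here). What your sketch is missing is a genuine argument that, for every $w$ with $w^\top Nw\leq 0$ and $w_1\neq 0$, the line $\mathrm{span}(w)$ can be completed to an $l$-dimensional subspace of the form $\mathrm{range}([I_l,Z]^\top)$ on which $N$ remains negative semidefinite. This is where the strictly feasible $\bar Z$ has to do real work, not just serve as a limiting device; without that construction the reduction to the scalar S-procedure does not go through.
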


This method begins by constructing a set that contains all possible systems compatible with the pre-collected data. Then, a controller is designed for all systems in the set. Since the true system is included in the system set, the designed controller is also applicable to the true system. It is seen that the construction of the system set is pretty significant. Due to the disturbances, different datasets can construct different system sets. As a result, controllers designed based on these different system sets have different control performances. Intuitively, the more data is used, the better control performance of the designed controller can be achieved. However, methods relying on a single dataset, such as \cite{vanWaardeHenkJ2020Fndt, SteentjesTomRV2022Odci}, may design a more conservative controller when more data is utilized.

To clarify this problem, we present an example using the following simple system
\begin{align}
    \bar{x}(k+1)=a_t \bar{x}(k)+b_t \bar{u}(k)+\bar{\omega}(k),\nonumber
\end{align}
where $\bar{x}(k)\in\mathbb{R}$ is the state; $\bar{u}(k)\in\mathbb{R}$ is the input; $\bar{\omega}(k)\in\mathbb{R}$ is the disturbance, which is bounded in $[-0.5,0.5]$; $a_t\in\mathbb{R}$ and $b_t\in\mathbb{R}$ are unknown system parameters. We assume that $a_t=0.5$ and $b_t=1$, which are used to generate the following length $9$ input-state trajectory
\begin{small}
    \begin{align}
      \bar{u}_{[0,8]}&\!=\!\!\{0.62,0.81,-0.74,0.82,0.26,-0.80,-0.44,0.09,0.91\},\nonumber\\
      \bar{x}_{[0,9]}&\!=\!\!\{0.1,1.14,1.36,0.23,0.58,0.48,-0.15,-0.23,0.44,1.29\}.\nonumber
    \end{align}
\end{small} 

Then, we define the following matrices
\begin{subequations}
    \begin{align}
        \bar U_{0|T-1} &= [\bar u(0),\bar u(1),\dots,\bar u(T-1)],\label{eq-Uex}\\
        \bar X_{0|T-1} &= [\bar x(0),\bar x(1),\dots,\bar x(T-1)],\label{eq-Xex}\\
        \bar X_{1|T} &= [\bar x(1),\bar x(2),\dots,\bar x(T)],\label{eq-Xexp}\\
        \bar W_{0|T-1} &= [\bar \omega(0),\bar \omega(1),\dots,\bar \omega(T-1)],\label{eq-Wexp}
        \end{align}
\end{subequations}
where $T\in[1,9]$.

According to \cite{vanWaardeHenkJ2020Fndt}, the set containing possible systems compatible with (\ref{eq-Uex})-(\ref{eq-Xexp}) is defined as
  \begin{align}
    \Sigma_T=\{(a,b)|\bar W_{0|T-1} \bar W_{0|T-1}^\top\leq0.5^2T\},\nonumber
  \end{align}
where $a\!\in\!\mathbb{R}$ and $b\!\in\!\mathbb{R}$ are the possible values of $a_t$ and $b_t$, respectively, and $\bar W_{0|T-1}=\bar X_{1|T}-a\bar X_{0|T-1}-b\bar U_{0|T\!-1}$. By selecting a specific value for $T$ from $[1,9]$, the equation $\bar W_{0|T-1} \bar W_{0|T-1}^\top\leq0.5^2T$ represents an ellipse in the $ab$-plane. This ellipse has all possible $(a,b)$ values compatible with the corresponding data. It is evident that the size of the ellipse directly correlates with the number of possible systems contained within the set $\Sigma_T$. A larger ellipse indicates a larger number of possible systems within the set. In this case, we select $T=2, 3, 8$, and the corresponding ellipses are illustrated in Fig. \ref{fig-ellipse}. It can be observed that the ellipse size of $T=8$ is larger than that of $T=3$, which suggests that having more data does not guarantee a smaller set. As a result, utilizing more data may design a more conservative controller, which contradicts our intuition. Therefore, how to effectively utilize more data to design a less conservative controller
is an important research topic.

\begin{figure}[htbp]
    \centering
    \includegraphics[width=0.98\linewidth]{./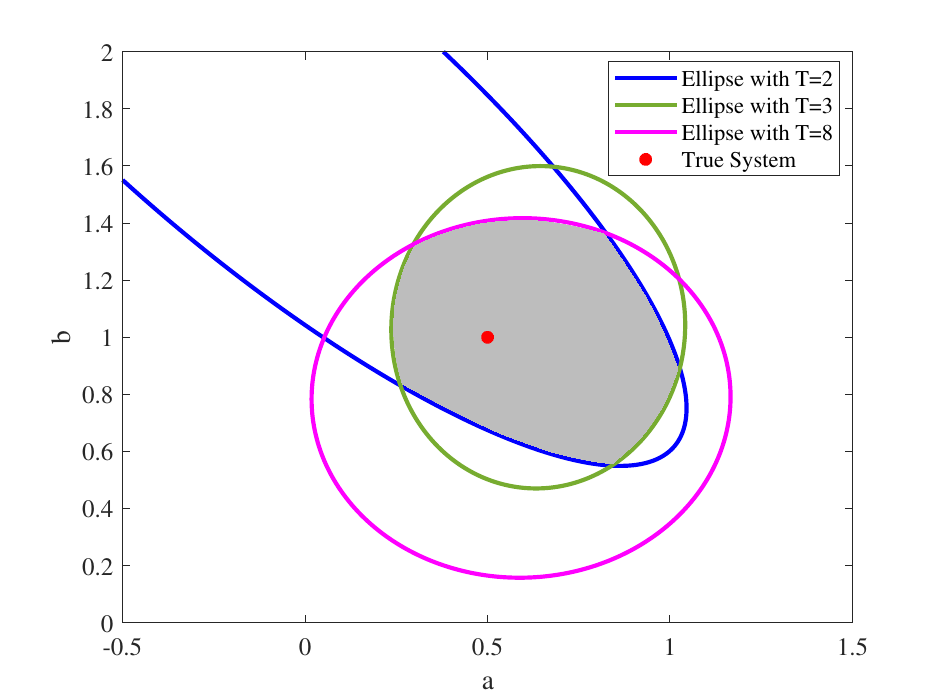}
    \caption{The ellipses with different data length $T$.}
    \label{fig-ellipse}
\end{figure}

To address this issue, we propose a method in \cite{hu2022} that designs the controller using multiple datasets instead of just one. Specifically, we first construct a set for each dataset and then design a controller only for systems in the intersection of these sets. Since the intersection set is smaller than any other set from a single dataset, exemplified in the last sub-figure of Fig. \ref{fig-ellipse}, the controller designed based on it could be less conservative. Additionally, the following Lemma \ref{le-slemma_v} are proposed to facilitate the controller design using multiple datasets.
\begin{lemma}[\cite{hu2022}]\label{le-slemma_v}
    Let $M,N_i \in \mathbb{R}^{(s+l)\times (s+l)}$, $i=1,\dots,q$ be symmetric matrices. If there exist constants $\tau_i\geq 0$, $i=1,\dots,q$ such that 
    \begin{align}
      M-\sum_{i=1}^{q}\tau_i N_i<0, \label{eq-multipleNi}
    \end{align}
    then $[I_l, Z] M [I_l, Z]^\top< 0$ holds for all $Z\in\mathbb{R}^{l\times s}$ such that 
    \begin{align}
        [I_l, Z]N_i [I_l, Z]^\top\leq 0, \forall i=1,\dots,q. \label{eq-le_qmi}
    \end{align}
\end{lemma}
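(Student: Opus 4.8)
The plan is to establish this by a single congruence transformation applied to the strict matrix inequality (\ref{eq-multipleNi}). Note that, unlike Lemma \ref{le-slemma}, only one implication is asserted here (the existence of the multipliers $\tau_i$ is \emph{sufficient} for the quadratic matrix inequalities to force $[I_l,Z]M[I_l,Z]^\top<0$), so no losslessness/Slater-point argument is required and the proof should be elementary and short.

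Concretely, I would fix an arbitrary $Z\in\mathbb{R}^{l\times s}$ that satisfies the constraints (\ref{eq-le_qmi}), i.e. $[I_l,Z]N_i[I_l,Z]^\top\leq0$ for every $i=1,\dots,q$. Multiplying (\ref{eq-multipleNi}) on the left by $[I_l,Z]$ and on the right by $[I_l,Z]^\top$ gives
\begin{align}
[I_l,Z]\Bigl(M-\sum_{i=1}^{q}\tau_i N_i\Bigr)[I_l,Z]^\top<0,\nonumber
\end{align}
where the strict sign is preserved because the matrix $[I_l,Z]$ has full row rank $l$ (its first block is $I_l$), so the map $v\mapsto[I_l,Z]^\top v$ is injective and a congruence by $[I_l,Z]$ sends negative definite matrices to negative definite matrices. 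Rearranging,
\begin{align}
[I_l,Z]M[I_l,Z]^\top<\sum_{i=1}^{q}\tau_i\,[I_l,Z]N_i[I_l,Z]^\top.\nonumber
\end{align}

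It then remains to observe that the right-hand side is negative semidefinite: each summand is the product of the nonnegative scalar $\tau_i$ with the negative semidefinite matrix $[I_l,Z]N_i[I_l,Z]^\top$, hence $\tau_i[I_l,Z]N_i[I_l,Z]^\top\leq0$, and a finite sum of negative semidefinite matrices is negative semidefinite. Combining the strict bound with $\sum_{i=1}^{q}\tau_i[I_l,Z]N_i[I_l,Z]^\top\leq0$ yields $[I_l,Z]M[I_l,Z]^\top<0$. Since $Z$ was an arbitrary admissible matrix, the claim follows.

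I do not anticipate a genuine obstacle here; the only point that needs a word of care is the preservation of \emph{strict} negative definiteness under the (generally non-square) congruence transformation, which is exactly why the full-row-rank structure of $[I_l,Z]$ is highlighted. One could alternatively phrase the same argument vectorwise: for any nonzero $\xi\in\mathbb{R}^l$, set $v=[I_l,Z]^\top\xi\neq0$ and evaluate the quadratic forms at $v$, which makes the strictness transparent. This vector form would be the fallback if a referee wants the matrix-congruence step spelled out.
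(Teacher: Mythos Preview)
Your argument is correct: the congruence by the full-row-rank matrix $[I_l,Z]$ preserves strict negative definiteness, and the nonnegative combination of the negative semidefinite terms $[I_l,Z]N_i[I_l,Z]^\top$ is itself negative semidefinite, yielding the desired strict inequality. Note, however, that the paper does not supply its own proof of this lemma---it is quoted from \cite{hu2022}---so there is no in-paper argument to compare against; your elementary congruence proof is exactly the standard route and would be the expected derivation.
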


In Lemma \ref{le-slemma_v}, the number of QMI \eqref{eq-le_qmi} and the unknown variables $\tau_i$ are the same as that of datasets. As the number of datasets increases, both the number of QMI \eqref{eq-le_qmi} and unknown variables $\tau_i$ also increase. Consequently, the scale of the LMI \eqref{eq-multipleNi} becomes larger, requiring more time to solve. This makes continuously increasing the number of datasets impractical, which hinders incorporating online data into the controller design process. Therefore, while this lemma allows for utilizing multiple datasets in controller design, it still has limitations when utilizing numerous datasets.

To overcome these limitations, we develop a new iterative method in this paper to solve the LMI \eqref{eq-multipleNi}. This approach allows for considering multiple datasets without introducing additional unknown variables. Based on this result, we design an offline and online data-driven $H_\infty$ control method for the system \eqref{eq-system}, respectively. The offline control method is designed for cases where only the controller designed at the end of the iteration is applied to the system \eqref{eq-system}. In contrast, the online control method is designed for scenarios where the designed controller at each iteration is applied immediately. Both control methods aim to achieve two control targets: i) stabilize the nominal system of (\ref{eq-system}); ii) attenuate the influence of the disturbance input $\omega(k)$ on the closed-loop system.


\section{A New Iterative Method}\label{sc-iterMethod}
As stated in Section \ref{sc-problemfor}, utilizing Lemma \ref{le-slemma_v} to design controllers for the system \eqref{eq-system} using numerous datasets presents a significant challenge. Furthermore, as more datasets are collected, the LMI (\ref{eq-multipleNi}) needs to be solved with increasing unknown variables $\tau_i$, $i=1,\dots,q$. To address this challenge, this section proposes a new iterative method for employing multiple datasets, as shown in Algorithm \ref{al-qmiIter}.
\begin{algorithm}[htbp]
    Set $i=1$, $\beta_1=0$, $N_1^h=0$\;
    \While{$i\leq q$}{
        Solve $M - \alpha_{i} N_{i}-\beta_i N_i^h< 0$\; \label{st-LMI}
        Set $N_{i+1}^h = \alpha_{i} N_{i}+\beta_{i} N_{i}^h$\;\label{st-hist-set}
        Set $i=i+1$\;
    }
    \caption{An Iterative Method to Solve LMI \eqref{eq-multipleNi}}\label{al-qmiIter}
\end{algorithm}

In Algorithm \ref{al-qmiIter}, $M$ and $N_i$ are defined in Lemma \ref{le-slemma_v}, $N_i^h\in\mathbb{R}^{(s+l)\times (s+l)}$, $\alpha_i\in\mathbb{R}$, and $\beta_i\in\mathbb{R}$, $i=1,\dots,q$. At each iteration $i$, only two datasets are considered: the new dataset $N_i$ and the historical dataset $N_{i}^h$. The historical dataset is constructed by combining the datasets that have been previously considered, as shown in step \ref{st-hist-set}.

Before implementing Algorithm \ref{al-qmiIter}, two crucial issues need to be addressed: i) whether the LMI in step \ref{st-LMI} is feasible at every iteration $i$, $1<i\leq q$; ii) whether a solution to the LMI \eqref{eq-multipleNi} can be obtained by implementing this algorithm. We summarize our findings in the following theorem.

\begin{theorem}\label{th-base}
Consider Algorithm \ref{al-qmiIter}. If there exists $\alpha_1\in\mathbb{R}$ such that the LMI in step \ref{st-LMI} is feasible at initial iteration $i=1$, then the following two statements hold.
\begin{enumerate}
    \item[(i)] The LMI is feasible at each iteration $i$, $1<i\leq q$;
    \item[(ii)] A solution to the LMI (\ref{eq-multipleNi}) is obtained at the end of the iteration.
\end{enumerate}
\end{theorem}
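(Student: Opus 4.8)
The plan is to prove both statements by induction on the iteration index, the key observation being that the matrix $N_{i+1}^h$ defined in step \ref{st-hist-set} is precisely the matrix that was subtracted from $M$ in the inequality just solved in step \ref{st-LMI}. Indeed, once step \ref{st-LMI} at iteration $i$ returns a pair $(\alpha_i,\beta_i)$ with $M - \alpha_i N_i - \beta_i N_i^h < 0$, step \ref{st-hist-set} assigns $N_{i+1}^h = \alpha_i N_i + \beta_i N_i^h$, so that $M - N_{i+1}^h < 0$ holds automatically. Everything follows from this.

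For (i), I would induct as follows. The base case $i=1$ is the hypothesis. Assume the algorithm has completed iteration $i-1$ for some $i$ with $1 < i \le q$; by the observation above, the assignment in step \ref{st-hist-set} gives $M - N_i^h < 0$. Then at iteration $i$ the choice $\alpha_i = 0$, $\beta_i = 1$ yields $M - \alpha_i N_i - \beta_i N_i^h = M - N_i^h < 0$, so the LMI in step \ref{st-LMI} is feasible at iteration $i$; in particular the while-loop never gets stuck and reaches $i=q$.

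For (ii), I would unfold the recursion in step \ref{st-hist-set}. Starting from $N_1^h = 0$, a straightforward induction shows that at each iteration $i$ one can write $N_i^h = \sum_{j=1}^{i-1} \gamma_j^{(i)} N_j$, where the coefficients obey $\gamma_j^{(i+1)} = \alpha_i$ for $j = i$ and $\gamma_j^{(i+1)} = \beta_i \gamma_j^{(i)}$ for $j < i$; since the $\alpha_i,\beta_i$ returned by step \ref{st-LMI} are nonnegative (as is standard when these variables act as S-procedure multipliers, and consistent with the feasible fallback $(0,1)$ from the proof of (i)), all these coefficients are nonnegative. Applying this at $i=q$, the final inequality $M - \alpha_q N_q - \beta_q N_q^h < 0$ rewrites as $M - \sum_{j=1}^{q} \tau_j N_j < 0$ with $\tau_q = \alpha_q \ge 0$ and $\tau_j = \beta_q \gamma_j^{(q)} \ge 0$ for $j < q$; hence $(\tau_1,\dots,\tau_q)$ solves the LMI \eqref{eq-multipleNi}, which by Lemma \ref{le-slemma_v} is exactly the desired conclusion.

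The substantive content is just the two inductions; the step I expect to require the most care is the coefficient bookkeeping in (ii) --- verifying that the recursively generated coefficients $\gamma_j^{(i)}$ never change sign, which is what forces the nonnegativity restriction on the variables solved for in step \ref{st-LMI} and makes the algorithm's output usable in Lemma \ref{le-slemma_v}. I would also handle the base cases explicitly, since $\beta_1 = 0$ and $N_1^h = 0$ make iteration $1$ slightly special, and note that every inequality remains strict because step \ref{st-LMI} is a strict LMI, so the concluding inequality in (ii) is strict as \eqref{eq-multipleNi} requires.
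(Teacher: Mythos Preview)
Your proposal is correct and follows essentially the same route as the paper's proof: the feasibility argument in (i) is exactly the paper's (pick $\alpha_i=0$, $\beta_i=1$ once $N_i^h$ encodes the previous inequality), and your unfolding in (ii) via the coefficients $\gamma_j^{(i)}$ is just a cleaner bookkeeping of the paper's explicit computation $\tau_i=\alpha_i\prod_{j=i+1}^{q+1}\beta_j$ (with $\beta_{q+1}=1$). If anything, your explicit attention to the nonnegativity of the resulting $\tau_i$ is more careful than the paper, which simply asserts that the product formula ``is a solution to the LMI \eqref{eq-multipleNi}'' without commenting on signs.
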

\begin{proof}
We first prove statement (i). Since the LMI in step \ref{st-LMI} at the initial iteration $i=1$, i.e., $M - \alpha_{1} N_{1}< 0$, is feasible, and $N_2^h=\alpha_{1} N_1$, the LMI in step \ref{st-LMI} is satisfied by selecting $\alpha_2=0$ and $\beta_2=1$ at the iteration $i=2$. This implies that the LMI is feasible at $i=2$. By repeating this process, we can conclude that the LMI is feasible at each iteration $i$, $1<i\leq q$ if it is feasible at $i=1$.

Subsequently, we prove the statement (ii). At the iteration $i=2$, substituting $N_2^h=\alpha_1 N_1$ into $M - \alpha_{2} N_{2}-\beta_2 N_2^h< 0$ yields
\begin{align}
    M - \alpha_{2} N_{2}-\alpha_1\beta_2 N_1< 0.
\end{align}

At $i=3$, we have $N_3^h=\alpha_2 N_2+\alpha_1\beta_2 N_1$. Substituting it into $M - \alpha_{3} N_{3}-\beta_3 N_3^h< 0$ yields
\begin{align}
    M - \alpha_{3} N_{3} - \alpha_{2} \beta_3 N_{2} - \alpha_1\beta_2\beta_3 N_1< 0.
\end{align}

Repeating the above process from $i=4$ to $i=q$ yields
\begin{align}
    M - \sum_{i}^q\alpha_i\prod_{j=i+1}^{q+1}\beta_j N_i<0,
\end{align}
where $\beta_{q+1}=1$.

It can be observed that $\tau_i=\alpha_i\prod_{j=i+1}^{q+1}\beta_j$, $i=1,\dots,q$ is a solution to the LMI (\ref{eq-multipleNi}).
\end{proof}

Subsequently, we utilize Algorithm \ref{al-qmiIter} to design data-driven $H_\infty$ control methods for the system \eqref{eq-system}. Depending on whether the controller designed at each iteration is applied to the controlled system immediately or not, we develop online and offline data-driven $H_\infty$ control methods, respectively.


\section{Offline Data-Driven $H_\infty$ Controller}\label{sc-offlineControl}
This section proposes an offline data-driven $H_\infty$ control method for the system \eqref{eq-system} using a group of pre-collected input-state-output datasets. To achieve this, we first construct a system set for each dataset, containing all possible systems compatible with the dataset. Then, we use Algorithm \ref{al-qmiIter} to design the controller by iteratively incorporating these system sets into the control design process.

Let $({u}_{i,[0,T_i-1]}, {x}_{i,[0,T_i]}, y_{i,[0,T_i-1]})$, $\forall i\in\mathcal{Q}=\{1,\dots,q\}$, represent the input-state-output trajectories of the system (\ref{eq-system}) collected under unknown disturbances $\omega_{i,[0,T_i-1]}$, where $q$ denotes the number of collected trajectories. Define the following data matrices
\begin{subequations}
    \begin{align}
        U_i &= [{u}_i(0), {u}_i(1), \dots, {u}_i(T_i-1)],\label{eq-Ui}\\
        X_{i} &= [{x}_i(0), {x}_i(1), \dots, {x}_i(T_i-1)],\label{eq-Xi}\\
        X_{i,+}&= [{x}_i(1), {x}_i(2), \dots, {x}_i(T_i)],\label{eq-Xip}\\
        Y_i &= [y_i(0), y_i(1), \dots, y_i(T_i-1)],\label{eq-Yi}\\
        W_i&= [\omega_i(0), \omega_i(1), \dots, \omega_i(T_i-1)].\label{eq-Wi}
\end{align}

\end{subequations}
\subsection{Offline Data-Based QMI}\label{sc-offlineQMI}
This subsection constructs a group of system sets $\Sigma_i$, $\forall i\in\mathcal{Q}$, where $\Sigma_i$ contains all possible systems compatible with (${u}_{i,[0,T_i-1]}$, ${x}_{i,[0,T_i]}$, $y_{i,[0,T_i-1]}$).

From (\ref{eq-system}), the data matrices (\ref{eq-Ui})-(\ref{eq-Wi}) satisfy
\begin{subequations}\nonumber
\begin{align}
X_{i,+}&=AX_{i}+B U_i+E_t W_i,\\
Y_i &= CX_{i}+DU_{i}+G_t W_i, \forall i\in\mathcal{Q},
\end{align}
\end{subequations}
which is equivalent to
\begin{align}
    \mathcal{G}W_i=[I_{n+p},Z]H_i, \forall i\in\mathcal{Q},\label{eq-datarel_sim}
\end{align}
where $H_i=[X_{i,+}^\top,Y_i^\top,-X_{i}^\top,-U_i^\top]^\top$ and 
\begin{align}
    Z&=
    \begin{bmatrix}
    A & B\\ C & D
    \end{bmatrix}, \mathcal{G}=\begin{bmatrix}
    E_t\\
    G_t
    \end{bmatrix}.\label{eq-ZGdef}
\end{align}
The notation $A\in\mathbb{R}^{n\times n}$, $B\in\mathbb{R}^{n\times m}$, $C\in\mathbb{R}^{p\times n}$, and $D\in\mathbb{R}^{p\times m}$ are the possible system matrices of $A_t$, $B_t$, $C_t$, and $D_t$, respectively. Notably, the datasets $(X_i, $ $X_{i,+}, Y_i, U_i)$, $\forall i\in\mathcal{Q}$ as well as matrices $E_t$ and $G_t$ are known. However, the disturbance sequence $W_i$, $\forall i\in\mathcal{Q}$ and the system matrices $(A, B, C, D)$ are unknown.

Under Assumption \ref{as-as3}, we have $\omega_i(k) {\omega_i(k)}^\top\leq \Upsilon$, which implies
\begin{align}
W_i{W_i}^\top\leq\Upsilon_{i}, \forall i\in\mathcal{Q},\label{eq-wwi}
\end{align}
where $\Upsilon_{i}=T_i\Upsilon$. 

Considering $i$-th set of the pre-collected data, for any $W_i$ satisfying \eqref{eq-wwi}, there may exist multiple system matrices $(A, B, C, D)$ satisfying (\ref{eq-datarel_sim}). All of these systems can be encompassed within the following set 
\begin{align}
    \Sigma_i=\{(A,B,C,D)|(\ref{eq-datarel_sim}) \text{ and } \eqref{eq-wwi}\}, \forall i\in\mathcal{Q}.\label{eq-set1}
\end{align}
It is evident that the true system matrices $(A_t,B_t,C_t,D_t)$ are also contained within ${\Sigma}_i$, $\forall i\in\mathcal{Q}$.

The set $\Sigma_i$ relies on the unknown $W_i$, making it difficult to use in practice. Therefore, we will eliminate $W_i$ in \eqref{eq-set1} by combining \eqref{eq-datarel_sim} and \eqref{eq-wwi}. Since the matrix $\mathcal{G}$ has full column rank, (\ref{eq-wwi}) is equivalent to \cite{Waarde2023}
\begin{align}
\mathcal{G}W_i{W_i}^\top \mathcal{G}^\top\leq \mathcal{G}\Upsilon_{i} \mathcal{G}^\top, \forall i\in\mathcal{Q}, \label{eq-wwehi}
\end{align}
which is obtained by pre- and post-multiplying both sides of (\ref{eq-wwi}) by $\mathcal{G}$ and $\mathcal{G}^\top$.

Substituting (\ref{eq-datarel_sim}) into (\ref{eq-wwehi}) gives the following quadratic matrix inequality (QMI)
\begin{align}
[I_{n+p}, Z] N_i [I_{n+p}, Z]^\top\leq 0, \forall i\in\mathcal{Q},\label{eq-condition_off}
\end{align}
where $N_i=H_i {H_i}^\top-\begin{bmatrix}
    \mathcal{G}\Upsilon_{i} \mathcal{G}^\top & 0\\
0 & 0
\end{bmatrix}$.

Then, we can redefine $\Sigma_i$ as
\begin{align}
\Sigma_i=\{(A, B, C, D)|(\ref{eq-condition_off})\}, \forall i\in\mathcal{Q}.\nonumber
\end{align}
The intersection of $\Sigma_i$, $\forall i\in\mathcal{Q}$ is defined as
\begin{align}
\Sigma^{off}=\bigcap_{i=1}^{q}\Sigma_i. \label{eq-sigma_off}
\end{align}
Since $(A_t,B_t,C_t,D_t)\!\in\!\Sigma_i,\forall i\in\mathcal{Q}$, $(A_t,B_t,C_t,D_t)\!\in\!\Sigma^{off}$. It implies that the controller designed for all systems in $\Sigma^{off}$ is applicable to the true system.

\subsection{Controller Design}
This subsection covers the design of offline data-driven $H_\infty$ controllers for all systems in $\Sigma^{off}$. Firstly, we design an offline data-driven $H_\infty$ controller by incorporating all datasets simultaneously. Based on this result, we then use Algorithm \ref{al-qmiIter} to design an offline data-driven $H_\infty$ controller by iteratively considering the datasets. Furthermore, we extend the control method to the scenario where the attenuation level is unknown. In this situation, we replace the LMI with an optimization problem to minimize the unknown attenuation level.

Let ${u}(k)=F{x}(k)$, where $F\in\mathbb{R}^{m\times n}$, be the state feedback controller of the system \eqref{eq-system}. The corresponding closed-loop system is given by
\begin{subequations}\label{eq-clstatespacei}
\begin{align}
{x}(k+1)&=\bar{A}{x}(k)+E_t\omega(k),\label{eq-clstatespaceia}\\
y(k)&=\bar{C}{x}(k)+G_t\omega(k),\label{eq-clstatespaceib}
\end{align}
\end{subequations}
where $\bar{A} = A+BF$ and $\bar{C} = C+DF$. With the help of Lemma \ref{le-slemma_v}, a state feedback controller can be designed by incorporating all datasets simultaneously, summarized in the following lemma.
\begin{lemma}\label{le-controller}
    Consider the system (\ref{eq-system}) under Assumptions \ref{as-as1}-\ref{as-as3}. Given a group of input-state-output trajectories $({u}_{i,[0,T_i-1]}, $ ${x}_{i,[0,T_i]}, y_{i,[0,T_i-1]})$, $\forall i\in\mathcal{Q}$ of \eqref{eq-system}, and an attenuation level $\gamma>0$, if there exist scalars $\tau_i\geq 0$, $\forall i\in\mathcal{Q}$, a matrix $S\in\mathbb{R}^{m\times n}$, and a positive definite matrix $\Gamma\in\mathbb{R}^{n\times n}$ such that the following LMI is feasible
    \begin{align}
        \mathcal{M}-\sum_{i=1}^{q}\tau_i \mathcal{N}_i< 0,\label{eq-dVwN}{}
    \end{align}
    where
    \begin{align}
    \mathcal{M}&\!=\!\!\begin{bmatrix}
    -\Gamma\!+\!\frac{1}{\gamma^2} E_t E_t^\top\! \!\!&\!\! \frac{1}{\gamma^2} E_t G_t^\top & 0 & 0 \!&\! 0\\
    \frac{1}{\gamma^2} G_t E_t^\top \!\!&\!\! -I_p\!+\!\frac{1}{\gamma^2} G_t G_t^\top & 0 & 0 \!&\! 0\\
    0 & 0 & \Gamma & S^\top \!&\! 0\\
    0 & 0 & S & 0 \!&\! S\\
    0 & 0 & 0 & S^\top \!&\! -\Gamma
    \end{bmatrix}\!\!,\label{eq-Mdef}\\
    \mathcal{N}_i&=\begin{bmatrix}
        N_i & 0\\
        0 & 0_{n\times n}
        \end{bmatrix},\label{eq-Nidef}
    \end{align}
    then the state feedback controller ${u}(k)=F x(k)$ with
    \begin{align}
    F=S\Gamma^{-1}\label{eq-Ksr}
    \end{align}
    asymptotically stabilizes the system (\ref{eq-system}) with $\omega(k)=0$. Furthermore, when the initial state is zero, i.e., $x(0)=0$, the designed controller achieves the following $H_\infty$ performance requirement
    \begin{align}
    \sum_{k=0}^{\infty}\|y(k)\|_{2}^2< \gamma^2 \sum_{k=0}^{\infty}\|\omega(k)\|_{2}^2.\label{eq-conditionH}
    \end{align}
\end{lemma}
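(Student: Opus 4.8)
The plan is to derive the result from the standard $H_\infty$ / bounded-real-lemma machinery for the closed-loop system \eqref{eq-clstatespacei}, reformulated so that the data-based QMIs \eqref{eq-condition_off} enter through Lemma \ref{le-slemma_v}. First I would write the discrete-time bounded-real condition for \eqref{eq-clstatespacei}: with a Lyapunov matrix $P=\Gamma^{-1}>0$, both asymptotic stability of $\bar A$ (when $\omega\equiv 0$) and the $H_\infty$ bound \eqref{eq-conditionH} follow from the dissipation inequality $V(x(k+1))-V(x(k))+\|y(k)\|_2^2-\gamma^2\|\omega(k)\|_2^2<0$ with $V(x)=x^\top P x$. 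Writing this out as a quadratic form in $\begin{bmatrix}x(k)^\top & \omega(k)^\top\end{bmatrix}^\top$ and using $x(k+1)=\bar A x(k)+E_t\omega(k)$, $y(k)=\bar C x(k)+G_t\omega(k)$ produces a matrix inequality of the form $\begin{bmatrix}\bar A^\top P\bar A - P + \bar C^\top\bar C & \star\\ \star & \star\end{bmatrix}<0$. Applying a Schur complement on the $P$-term (pulling out $\bar A$, $\bar C$) and then congruence-transforming by $\mathrm{diag}(\Gamma,\dots)$ to replace $P$ by $\Gamma$ and the feedback gain $F$ by $S=F\Gamma$, I expect to arrive exactly at a statement of the form $[I,Z]\,\mathcal{M}\,[I,Z]^\top < 0$ (with the closed-loop matrices $\bar A=A+BF$, $\bar C=C+DF$ encoded through $Z$ from \eqref{eq-ZGdef} and the augmented block structure of $\mathcal{M}$ in \eqref{eq-Mdef}), where the last two block rows/columns involving $S$ and $-\Gamma$ are precisely the Schur-complement bookkeeping that linearizes the term $\bar A\Gamma\bar A^\top$.

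Next I would invoke Lemma \ref{le-slemma_v}. The true system satisfies every QMI \eqref{eq-condition_off}, i.e.\ $[I_{n+p},Z]N_i[I_{n+p},Z]^\top\le 0$ for all $i$; embedding $N_i$ into $\mathcal{N}_i$ as in \eqref{eq-Nidef} (padding with the extra $n\times n$ zero block that matches the $-\Gamma$ slack block of $\mathcal{M}$) makes the constraint $[I,Z]\mathcal{N}_i[I,Z]^\top\le 0$ equivalent, with the appropriate $Z$ now carrying the closed-loop data. Lemma \ref{le-slemma_v} then says that feasibility of the LMI \eqref{eq-dVwN}, $\mathcal{M}-\sum_i\tau_i\mathcal{N}_i<0$ with $\tau_i\ge 0$, implies $[I,Z]\mathcal{M}[I,Z]^\top<0$ for every $Z$ consistent with all the datasets — in particular for the closed-loop matrices of the true system $(A_t,B_t,C_t,D_t)\in\Sigma^{off}$. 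Unwinding the congruence transformation and the Schur complement then recovers the dissipation inequality above for the true closed loop.

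Finally I would close the argument: from the dissipation inequality, $\omega\equiv 0$ gives $V(x(k+1))-V(x(k))<0$ for $x(k)\neq 0$, hence $\bar A=A_t+B_tF$ is Schur and the nominal closed loop is asymptotically stable; summing the inequality from $k=0$ to $\infty$ with $x(0)=0$ and using $V(x(k))\to 0$ (which follows from stability) telescopes to $\sum_k\|y(k)\|_2^2<\gamma^2\sum_k\|\omega(k)\|_2^2$, i.e.\ \eqref{eq-conditionH}. The main obstacle, and the step deserving the most care, is the algebraic bridge in the first paragraph: getting the block pattern of $\mathcal{M}$ in \eqref{eq-Mdef} to come out exactly right — in particular verifying that the $\frac{1}{\gamma^2}E_tE_t^\top$, $\frac{1}{\gamma^2}G_tG_t^\top$ cross terms and the triple-$S$/$-\Gamma$ slack blocks are precisely what the Schur complements of the $\omega$-to-$y$ channel and of $\bar A\Gamma\bar A^\top$ generate, and that the partition of $Z=\begin{bmatrix}A & B\\ C & D\end{bmatrix}$ lines up columnwise with $H_i$ so that $[I_{n+p},Z]N_i[I_{n+p},Z]^\top$ is the same quantity that appears after the congruence transform. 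Once that correspondence is pinned down, the rest is a direct application of Lemma \ref{le-slemma_v} and a routine Lyapunov summation.
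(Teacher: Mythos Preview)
Your proposal is correct and follows essentially the same route as the paper: Schur-complement the last $-\Gamma$ slack block out of $\mathcal{M}$ to obtain the reduced matrix $M$, invoke Lemma~\ref{le-slemma_v} with $M$ and the $N_i$ to get $[I_{n+p},Z]M[I_{n+p},Z]^\top<0$ for every $(A,B,C,D)\in\Sigma^{off}$, then unwind via further Schur complements and the substitution $S=F\Gamma$ to the dissipation inequality $\Delta V(x(k))<-\|y(k)\|_2^2+\gamma^2\|\omega(k)\|_2^2$, from which both conclusions follow. One small slip worth fixing: in the last step you justify the telescoping sum for \eqref{eq-conditionH} by ``$V(x(k))\to 0$ (which follows from stability),'' but asymptotic stability was only established for $\omega\equiv 0$; the correct (and standard) argument---also what the paper uses---is simply $V(x(0))=0$ together with $V\ge 0$, so that summing the dissipation inequality from $0$ to any $N$ already gives $\sum_{k=0}^{N}\|y(k)\|_2^2<\gamma^2\sum_{k=0}^{N}\|\omega(k)\|_2^2$.
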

\begin{proof}
    The proof is given in Appendix \ref{ap-pf_lemma3}.
\end{proof}

\begin{remark}
    The proposed controller has two advantages over the data-driven $H_\infty$ controller in \cite{vanWaardeHenkJ2020Fndt}. Firstly, the proposed controller is designed by incorporating multiple datasets, which may result in a less conservative controller. Secondly, the proposed controller investigates more general systems, where the matrices $C_t$ and $D_t$ are unknown, while \cite{vanWaardeHenkJ2020Fndt} assumes these matrices are known.
\end{remark}

In addition to designing the offline data-driven $H_\infty$ controller using all datasets simultaneously, as shown in Lemma \ref{le-controller}, an alternative approach is to iteratively design the controller by utilizing the datasets one by one. This iterative design process is similar to Algorithm \ref{al-qmiIter} and is outlined in Algorithm \ref{al-OfflineHinftyController}.
\begin{algorithm}[htbp]
    \KwData{Input-state-output datasets $({u}_{i,[0,T_i-1]}, {x}_{i,[0,T_i]},$  $y_{i,[0,T_i-1]}),\forall i\in\mathcal{Q}$}
    Given $\gamma$\;
    Set $i=1$, $\beta_1=0$, $\mathcal{N}_1^h=0$\;
    \While{$i\leq q$}{
        Construct $\mathcal{N}_{i}$ in \eqref{eq-Nidef}\;
        Solve the LMI \eqref{eq-off_pro}\;\label{st-lmi_off}
        Set $\mathcal{N}_{i+1}^h=\alpha_{i} \mathcal{N}_{i}+\beta_{i} \mathcal{N}_{i}^h$\;
        Set $i=i+1$\;
    }
    Set $F=S_q\Gamma_q^{-1}$\;
    \caption{Offline Data-Driven $H_{\infty}$ Controller}\label{al-OfflineHinftyController}
\end{algorithm}

In Algorithm \ref{al-OfflineHinftyController}, the LMI \eqref{eq-off_pro} is formulated as
\begin{align}
    \mathcal{M}_i-\alpha_i \mathcal{N}_i-\beta_i \mathcal{N}_i^h < 0,\label{eq-off_pro}
\end{align}
where $\alpha_i\geq 0$, $\beta_i\geq 0$, $\mathcal{N}_i$ is defined in \eqref{eq-Nidef}, and 
\begin{align}
    \mathcal{M}_i&=\text{diag}(Q_i,R_i),\nonumber\\
        \mathcal{N}_i^h&=\left\{
        \begin{array}{ccl}
            0, & & i=1,\\
            \alpha_{i-1} \mathcal{N}_{i-1}+\beta_{i-1} \mathcal{N}_{i-1}^h, & & 2\leq i\leq q.  
        \end{array}
            \right.\nonumber
\end{align}
In addition, the notations $Q_i$ and $R_i$ are defined as
\begin{align}
    Q_i &= \begin{bmatrix}
        -\Gamma_i\!+\!\frac{1}{\gamma^2} E_tE_t^\top & \frac{1}{\gamma^2} E_tG_t^\top\\
        \frac{1}{\gamma^2} G_tE_t^\top & -I_p+\frac{1}{\gamma^2} G_tG_t^\top   
    \end{bmatrix},\nonumber\\
    R_i & = \begin{bmatrix}
        \Gamma_i & S_i^\top & 0\\
        S_i & 0 & S_i\\
        0 & S_i^\top & -\Gamma_i    
    \end{bmatrix},\label{eq-Ridef}
\end{align}
where $\Gamma_i\in\mathbb{R}^{n\times n}$ is positive definite, and $S_i\in\mathbb{R}^{m\times n}$.

The following theorem summarizes the results regarding the stability of the closed-loop system (\ref{eq-system}) under the controller obtained through Algorithm \ref{al-OfflineHinftyController}.
\begin{theorem}\label{th-offhinftyControl}
Consider the system (\ref{eq-system}) under Assumptions \ref{as-as1}-\ref{as-as3}, and Algorithm \ref{al-OfflineHinftyController}. Given  a group of input-state-output trajectories $({u}_{i,[0,T_i-1]}, {x}_{i,[0,T_i]}, y_{i,[0,T_i-1]})$, $\forall i\in\mathcal{Q}$ of \eqref{eq-system}, and an attenuation level $\gamma\!>\!0$, if there exists a scalar $\alpha_1\!\geq\!0$, a matrix $S_1\!\in\!\mathbb{R}^{m\times n}$, and a positive definite matrix $\Gamma_1\in\mathbb{R}^{n\times n}$ at the initial iteration, i.e., $i=1$ such that the LMI \eqref{eq-off_pro} is feasible, then the LMI \eqref{eq-off_pro} is feasible at each iteration $i$, $1<i\leq q$. Furthermore, the state feedback controller ${u}(k)=F x(k)$ with $F=S_q\Gamma_q^{-1}$ asymptotically stabilizes the system (\ref{eq-system}) with $\omega(k)=0$, and meanwhile fulfills the $H_\infty$ performance requirement (\ref{eq-conditionH}) when the system has a zero initial state, i.e., $x(0)=0$.
\end{theorem}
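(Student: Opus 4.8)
The proof naturally splits into two parts, mirroring the structure of Theorem~\ref{th-base}: first the feasibility-propagation claim, then the stability and $H_\infty$ performance guarantee for the final controller. For the feasibility part, the plan is to reproduce the argument of Theorem~\ref{th-base}(i) almost verbatim, but now with the iteration-dependent left-hand side $\mathcal{M}_i$ instead of a fixed $M$. The key observation is that at iteration $i$ the matrices $Q_i$ and $R_i$ depend on the decision variables $\Gamma_i$ and $S_i$ only — not on any quantity carried over from previous iterations — so $\mathcal{M}_i$ has exactly the same structural form for every $i$. Hence, given a feasible triple $(\alpha_1,S_1,\Gamma_1)$ at $i=1$ satisfying $\mathcal{M}_1-\alpha_1\mathcal{N}_1<0$, at iteration $i=2$ we have $\mathcal{N}_2^h=\alpha_1\mathcal{N}_1$, and we may simply choose $\Gamma_2=\Gamma_1$, $S_2=S_1$, $\alpha_2=0$, $\beta_2=1$; then $\mathcal{M}_2-\alpha_2\mathcal{N}_2-\beta_2\mathcal{N}_2^h=\mathcal{M}_1-\alpha_1\mathcal{N}_1<0$, so the LMI is feasible at $i=2$. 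Induction on $i$ then closes the first claim: at each step we can always retreat to the previous solution by zeroing out $\alpha_i$ and setting $\beta_i=1$.

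For the second part I would first unwind the recursion for $\mathcal{N}_i^h$ exactly as in the proof of Theorem~\ref{th-base}(ii). Substituting $\mathcal{N}_i^h=\alpha_{i-1}\mathcal{N}_{i-1}+\beta_{i-1}\mathcal{N}_{i-1}^h$ repeatedly into the final LMI $\mathcal{M}_q-\alpha_q\mathcal{N}_q-\beta_q\mathcal{N}_q^h<0$ yields
\begin{align}
\mathcal{M}_q-\sum_{i=1}^{q}\Big(\alpha_i\prod_{j=i+1}^{q+1}\beta_j\Big)\mathcal{N}_i<0,\nonumber
\end{align}
with the convention $\beta_{q+1}=1$. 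Setting $\tau_i=\alpha_i\prod_{j=i+1}^{q+1}\beta_j\geq 0$ (nonnegativity is inherited from $\alpha_i,\beta_i\geq0$) and noting that $\mathcal{M}_q$ coincides with $\mathcal{M}$ of Lemma~\ref{le-controller} once we identify $\Gamma=\Gamma_q$, $S=S_q$, this is precisely the LMI \eqref{eq-dVwN} of Lemma~\ref{le-controller}. Invoking Lemma~\ref{le-controller} with $F=S_q\Gamma_q^{-1}$ then delivers both the asymptotic stability of the disturbance-free system and the $H_\infty$ bound \eqref{eq-conditionH}, completing the proof.

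The only mild subtlety — and what I expect to be the one place needing care rather than the routine book-keeping — is verifying that $\mathcal{M}_q$ is genuinely the same matrix as $\mathcal{M}$ in Lemma~\ref{le-controller} after the block reshuffling: $\mathcal{M}$ is written as a $5\times5$ block matrix whereas $\mathcal{M}_q=\mathrm{diag}(Q_q,R_q)$ is presented as a block-diagonal concatenation of a $2\times2$ and a $3\times3$ block. One has to check that the zero off-diagonal blocks in \eqref{eq-Mdef} make these two presentations literally identical, and that $\mathcal{N}_i$ in \eqref{eq-Nidef} is padded with the same trailing $n\times n$ zero block in both places so that dimensions match. Once this identification is made explicit, everything else is a direct appeal to the two already-established results, so no new estimates or constructions are required.
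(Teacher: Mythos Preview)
Your proposal is correct and follows essentially the same route as the paper's own proof: feasibility propagation via the choice $\alpha_i=0$, $\beta_i=1$, $S_i=S_{i-1}$, $\Gamma_i=\Gamma_{i-1}$ (mirroring Theorem~\ref{th-base}(i)), then unwinding the recursion to exhibit $(\tau_i,S,\Gamma)=(\alpha_i\prod_{j=i+1}^{q+1}\beta_j,\,S_q,\,\Gamma_q)$ as a feasible point for \eqref{eq-dVwN}, and finally invoking Lemma~\ref{le-controller}. Your added remark about the block identification $\mathcal{M}_q=\mathcal{M}$ is a helpful sanity check that the paper leaves implicit.
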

\begin{proof}
Similar to the proof of Theorem \ref{th-base}, we can claim that the LMI (\ref{eq-off_pro}) is feasible at each iteration $1<i\leq q$ if it is feasible at $i=1$. Furthermore, at $i=q$, the pair $(\alpha_1\prod_{j=2}^q\beta_j,$ $\dots,\alpha_{q-1}\beta_q,\alpha_{q},S_q,\Gamma_q)$ is a solution for $(\tau_1,\dots,\tau_{q-1},\tau_q,S,$ $\Gamma)$ to the LMI (\ref{eq-dVwN}). According to Lemma \ref{le-controller}, the designed controller can achieve the two control targets.
\end{proof}

Theorem \ref{th-offhinftyControl} requires the attenuation level $\gamma$ to be specified. However, if the value of $\gamma$ is unknown beforehand, we can also design the controller using Algorithm \ref{al-OfflineHinftyController}. In this algorithm, $\gamma$ is treated as a variable in the LMI \eqref{eq-off_pro}. Moreover, it is possible to determine a minimum attenuation level $\gamma_i$ at each iteration $i$, $i\in\mathcal{Q}$ by replacing the LMI \eqref{eq-off_pro} in Algorithm \ref{al-OfflineHinftyController} with the following optimization problem \eqref{op-pr_off}.
\begin{mini!}|s|
    {\substack{\gamma_i,\alpha_i,\beta_i,S_i,\Gamma_i}}{\gamma_i^2}{\label{op-pr_off}}{}
    \addConstraint{\bar{\mathcal{M}}_i-\alpha_i \mathcal{N}_i-\beta_i \mathcal{N}_i^h < 0,}{\label{eq-off_pro_var}}
\end{mini!}
where $\gamma_i>0$, $\alpha_i\geq 0$, $\beta_i\geq 0$, $\bar{\mathcal{M}}_i=\text{diag}(\bar{Q}_i,R_i)$ with $R_i$ defined in \eqref{eq-Ridef} and 
\begin{align}
    \bar{Q}_i = \begin{bmatrix}
        -\Gamma_i+\frac{1}{\gamma_i^2} E_tE_t^\top & \frac{1}{\gamma_i^2} E_tG_t^\top\\
        \frac{1}{\gamma_i^2} G_tE_t^\top & -I_p+\frac{1}{\gamma_i^2} G_tG_t^\top   
    \end{bmatrix}.\nonumber
\end{align}

The results of the stability of the closed-loop system (\ref{eq-system}) under the designed controller and the attenuation level $\gamma_i$, $i\in\mathcal{Q}$, are summarized in the following corollary.
\begin{corollary}\label{co-offhinftyControl_gamma}
    Consider the system (\ref{eq-system}) under Assumptions \ref{as-as1}-\ref{as-as3}, and Algorithm \ref{al-OfflineHinftyController} that replaces the LMI \eqref{eq-off_pro} with the optimization problem \eqref{op-pr_off}. Given a group of input-state-output trajectories $({u}_{i,[0,T_i-1]}, {x}_{i,[0,T_i]}, y_{i,[0,T_i-1]})$, $\forall i\in\mathcal{Q}$ of \eqref{eq-system}, if there exists a scalar $\alpha_1\geq 0$, a matrix $S_1\in\mathbb{R}^{m\times n}$, and a positive definite matrix $\Gamma_1\in\mathbb{R}^{n\times n}$ at the initial iteration, i.e., $i=1$ such that the optimization problem \eqref{op-pr_off} is feasible, then the state feedback controller ${u}(k)=F x(k)$ with $F=S_q\Gamma_q^{-1}$ asymptotically stabilizes the system (\ref{eq-system}) with $\omega(k)=0$, and meanwhile fulfills the $H_\infty$ performance requirement \eqref{eq-conditionH} with $\gamma=\gamma_q$ when the system has a zero initial state. Moreover, the attenuation level $\gamma_i$, $i\in\mathcal{Q}$ decreases over the iteration, i.e., $\gamma_1\geq\gamma_2\geq\dots\geq \gamma_q$.
\end{corollary}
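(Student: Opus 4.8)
The plan is to mirror the proofs of Theorems~\ref{th-base} and~\ref{th-offhinftyControl}, adding a short optimality comparison to obtain the monotonicity of $\gamma_i$. The structural fact I would exploit throughout is that $\bar{\mathcal{M}}_i=\text{diag}(\bar{Q}_i,R_i)$ depends only on the decision variables $(\gamma_i,S_i,\Gamma_i)$ of the current iteration and not on any data matrix; consequently, any feasible point of iteration $i$ can be ``carried forward'' to iteration $i+1$.

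First I would establish feasibility at every iteration. At $i=1$ the constraint \eqref{eq-off_pro_var} reads $\bar{\mathcal{M}}_1-\alpha_1\mathcal{N}_1<0$ because $\mathcal{N}_1^h=0$; assuming this holds, at $i=2$ we have $\mathcal{N}_2^h=\alpha_1\mathcal{N}_1$, so the choice $\alpha_2=0$, $\beta_2=1$, $(\gamma_2,S_2,\Gamma_2)=(\gamma_1,S_1,\Gamma_1)$ turns \eqref{eq-off_pro_var} into exactly the feasible inequality of iteration $1$. Repeating this step shows that the optimization problem \eqref{op-pr_off} is feasible for every $i$, $1<i\le q$.

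Next I would unravel the recursion for $\mathcal{N}_i^h$ exactly as in the proof of Theorem~\ref{th-base}: back-substituting shows that at $i=q$ the tuple $\bigl(\alpha_1\prod_{j=2}^{q}\beta_j,\dots,\alpha_{q-1}\beta_q,\alpha_q,S_q,\Gamma_q\bigr)$ together with $\gamma=\gamma_q$ solves the LMI \eqref{eq-dVwN} of Lemma~\ref{le-controller}, since $\bar{\mathcal{M}}_q$ evaluated at $(\gamma_q,S_q,\Gamma_q)$ coincides with $\mathcal{M}$ evaluated at the same variables with $\gamma=\gamma_q$. Lemma~\ref{le-controller} then yields asymptotic stability of the nominal system and the $H_\infty$ bound \eqref{eq-conditionH} with $\gamma=\gamma_q$.

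Finally, for the monotonicity $\gamma_1\ge\cdots\ge\gamma_q$, I would reuse the carry-forward point: at iteration $i+1$, taking $\alpha_{i+1}=0$, $\beta_{i+1}=1$, and $(\gamma_{i+1},S_{i+1},\Gamma_{i+1})=(\gamma_i,S_i,\Gamma_i)$ makes \eqref{eq-off_pro_var} reduce to $\bar{\mathcal{M}}_i-\alpha_i\mathcal{N}_i-\beta_i\mathcal{N}_i^h<0$, the very inequality satisfied by the optimal solution of iteration $i$. Thus $\gamma_i$ is admissible for the problem at iteration $i+1$, and minimality of $\gamma_{i+1}^2$ forces $\gamma_{i+1}^2\le\gamma_i^2$, i.e., $\gamma_{i+1}\le\gamma_i$; chaining over $i$ completes the claim. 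I do not anticipate a genuine obstacle: the argument is essentially bookkeeping parallel to Theorems~\ref{th-base}--\ref{th-offhinftyControl}, and the only point to state carefully is that $\bar{\mathcal{M}}_i$ does not couple to $\mathcal{N}_i$ or $\mathcal{N}_i^h$, which is immediate from \eqref{eq-Ridef} and the definition of $\bar{Q}_i$, together with (if one wishes to be scrupulous) a consistent treatment of the infimum in \eqref{op-pr_off} across iterations.
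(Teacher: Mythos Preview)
Your proposal is correct and follows essentially the same route as the paper: carry the optimal point of iteration $i$ forward via $\alpha_{i+1}=0$, $\beta_{i+1}=1$ to obtain feasibility at every step, unravel $\mathcal{N}_i^h$ to land in Lemma~\ref{le-controller} at $i=q$, and read off $\gamma_{i+1}\le\gamma_i$ from optimality. The only cosmetic difference is that the paper first substitutes $\lambda_i=1/\gamma_i^2$ so that \eqref{eq-off_pro_var} becomes a genuine LMI in $(\lambda_i,\alpha_i,\beta_i,S_i,\Gamma_i)$ and the problem is manifestly convex with a global minimizer; your parenthetical remark about ``a consistent treatment of the infimum'' is exactly where that substitution would slot in if you want to be precise about existence of the minimum.
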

\begin{proof}
    Let $\lambda_i=1/\gamma_i^2$, $\forall i\in\mathcal{Q}$. The optimization problem \eqref{op-pr_off} can be equivalently written as
    \begin{mini!}|s|
        {\substack{\lambda_i,\alpha_i,\beta_i,S_i,\Gamma_i}}{1/\lambda_i}{\label{op-pr_off_var2}}{}
        \addConstraint{\tilde{\mathcal{M}}_i-\alpha_i \mathcal{N}_i-\beta_i \mathcal{N}_i^h \leq 0,}\label{eq-off_pro_var2}{}
    \end{mini!}
    where $\tilde{\mathcal{M}}_i=\text{diag}(\tilde{Q}_i,R_i)$ with 
    \begin{align}
        \tilde{Q}_i = \begin{bmatrix}
            -\Gamma_i+\lambda_i E_tE_t^\top & \lambda_i E_tG_t^\top\\
            \lambda_i G_tE_t^\top & -I_p+\lambda_i G_tG_t^\top   
        \end{bmatrix}.\nonumber
    \end{align}
    The convexity of the objective function $1/\lambda_i$ when $\lambda_i>0$, combined with the LMI \eqref{eq-off_pro_var2}, indicates that the optimization problem \eqref{op-pr_off_var2} is convex and possesses a global optimization solution.

    Similar to the proof of Theorem \ref{th-base}, the problem \eqref{op-pr_off_var2} is feasible at each iteration $1<i\leq q$ if it is feasible at $i=1$. Moreover, we can derive that $(\lambda_i, 0, 1, S_i, \Gamma_i)$ is a solution for $(\lambda_{i+1}, \alpha_{i+1}, \beta_{i+1}, S_{i+1}, \Gamma_{i+1})$ to the LMI \eqref{eq-off_pro_var} at time $i+1$, $1\leq i\leq q-1$. Combining this result with the fact that the optimization problem \eqref{op-pr_off_var2} is convex, we have $\frac{1}{\lambda_i}\geq \frac{1}{\lambda_{i+1}}$, which together with $\lambda_i=1/\gamma_i^2$ and $\gamma_i>0$ implies that $\gamma_i\geq \gamma_{i+1}$, $1\leq i\leq q-1$. It follows that $\gamma_1\geq\gamma_2\geq\dots\geq \gamma_q$.

    The rest of the proof is the same as that of Theorem \ref{th-offhinftyControl}, and thus is omitted.
\end{proof}
\begin{remark}
    In contrast to the controller design method in \cite{hu2022}, which employs all datasets simultaneously, the controller proposed in Corollary \ref{co-offhinftyControl_gamma} is designed iteratively by using one dataset at a time.
\end{remark}


\section{Online Data-Driven $H_\infty$ Controller}\label{sc-onlineControl}
This section presents an online data-driven $H_\infty$ control method for the system \eqref{eq-system}. In each iteration, this method applies the designed controller to the system \eqref{eq-system} and collects online data to construct new datasets.

Let $({u}_{[-L, k-1]}, {x}_{[-L, k]}, y_{[-L, k-1]})$ be an input-state-output trajectory of the system (\ref{eq-system}) under unknown disturbances $\omega_{[-L,k-1]}$ at time $k$, where $L$ is the data length. Define the following data matrices
\begin{subequations}
    \begin{align}
        U_{k-L|k-1} &= [{u}(k-L),\dots,{u}(k-1)],\label{eq-Uk}\\
        X_{k-L|k-1} &= [{x}(k-L),\dots,{x}(k-1)],\label{eq-Xk}\\
        X_{k-L+1|k} &= [{x}(k-L+1),\dots,{x}(k)],\label{eq-Xkp}\\
        Y_{k-L|k-1} &= [y(k-L),\dots,y(k-1)],\label{eq-Yk}\\
        W_{k-L|k-1} &= [\omega(k-L),\dots,\omega(k-1)].\label{eq-Wk}
    \end{align}
\end{subequations}
These data matrices will be updated in the next time step $k+1$.

\subsection{Online Data-Based QMI}\label{sc-dbQMI}
This subsection constructs a system set $\Sigma_k$ that contains all possible systems compatible with (${u}_{[-L, k-1]}$, ${x}_{[-L, k]}$, $y_{[-L, k-1]}$) at time $k$. The next iteration will construct the system set $\Sigma_{k+1}$ using the updated data matrices.

From (\ref{eq-system}), the data matrices (\ref{eq-Uk})-(\ref{eq-Wk}) satisfy 
\begin{align}
X_{k-L+1|k}&=AX_{k-L|k-1}+B U_{k-L|k-1}+E_t W_{k-L|k-1},\nonumber\\
Y_{k-L|k-1}&=CX_{k-L|k-1}+D U_{k-L|k-1}+G_t W_{k-L|k-1},\nonumber
\end{align}
which is equivalent to
\begin{align}
    \mathcal{G}W_{k-L|k-1}=[I_{n+p}, Z]H_k,\label{eq-ehwdes}
\end{align}
where both $Z$ and $\mathcal{G}$ are defined in (\ref{eq-ZGdef}), and 
\begin{align}
H_{k}=[X_{k-L+1|k}^\top,Y_{k-L|k-1}^\top,-X_{k-L|k-1}^\top,-U_{k-L|k-1}^\top]^\top.\nonumber
\end{align}

By repeating a similar process as in (\ref{eq-wwi}) to (\ref{eq-condition_off}), we obtain the following QMI
\begin{align}
[I_{n+p}, Z] N_{k} [I_{n+p}, Z]^\top\leq 0,\label{eq-condition_on}
\end{align}
where $N_{k}=H_{k} H_{k}^\top-\begin{bmatrix}
    \mathcal{G}\Upsilon_{L} \mathcal{G}^\top & 0\\
0 & 0
\end{bmatrix}$ with $\Upsilon_{L}=L\Upsilon$.

Then, define the system set $\Sigma_{k}$ as
\begin{align}
\Sigma_{k}=\{(A, B, C, D)|(\ref{eq-condition_on})\}.\nonumber
\end{align}
The intersection of $\Sigma_{\iota}$, $\forall \iota=0,\dots,k$ is defined as
\begin{align}
\Sigma_k^{on}=\bigcap_{\iota=0}^{k}\Sigma_{\iota}. \nonumber
\end{align}

Similar to the set $\Sigma^{off}$ in \eqref{eq-sigma_off}, $\Sigma_k^{on}$ also contain the true system. In addition, from the definition of $\Sigma_k^{on}$, we have
\begin{align}
    \Sigma_{k+1}^{on}=\Sigma_k^{on} \cap \Sigma_{k+1}, \nonumber
\end{align} 
which implies $\Sigma_{0}^{on}\supseteq \Sigma_{1}^{on}\supseteq \dots \supseteq \Sigma_k^{on} \supseteq \Sigma_{k+1}^{on} $. It means that the intersection set $\Sigma_k^{on}$ will gradually decrease over time. 
\subsection{Controller Design}
This subsection presents the online data-driven $H_\infty$ control method. At each time $k$, a state feedback controller ${u}(k)=F_k{x}(k)$, where $F_k\in\mathbb{R}^{m\times n}$, is designed for all systems in $\Sigma_k^{on}$. Since the set $\Sigma_k^{on}$ gradually decreases over time, the designed controller could progressively be less conservative.

The closed-loop system under the state feedback controller is described by
\begin{subequations}\label{eq-clstatespace}
\begin{align}
{x}(k+1)&=\bar{A}_{k}{x}(k)+E_t\omega(k),\label{eq-clstatespace1}\\
y(k)&=\bar{C}_{k}{x}(k)+G_t\omega(k),\label{eq-clstatespace2}
\end{align}
\end{subequations}
with $\bar{A}_{k} = A+BF_k$ and $\bar{C}_{k} = C+DF_k$. With the help of Algorithm \ref{al-qmiIter}, the online controller can be calculated using the following Algorithm \ref{al-OnlineHinftyController}.

\begin{algorithm}[htbp]
    \KwData{Input-state-output dataset $({u}_{[-L, -1]}, {x}_{[-L, 0]}, $ $y_{[-L, -1]})$}
    Given $\gamma$\;
    Set $k=0$, $\beta_0=0$, $\mathcal{N}_0^h=0$\;
    \Loop{}{
        Collect $x(k)$ and $y(k)$\; 
        Construct $\mathcal{N}_{k}=\text{diag}(N_k,0_{n\times n})$ with $N_k$ being defined right after \eqref{eq-condition_on}\;
        Solve the optimization problem (\ref{op-pr_on})\;\label{st-op_on}
        Apply $u(k)=F_k x(k)$ with $F_k=S_k\Gamma_k^{-1}$\;
        Set $\mathcal{N}_{k+1}^h=\alpha_{k} \mathcal{N}_{k}+\beta_{k} \mathcal{N}_{k}^h$\;
        Set $k=k+1$\;
    }
    \caption{Online Data-Driven $H_{\infty}$ Controller}\label{al-OnlineHinftyController}
\end{algorithm}

In Algorithm \ref{al-OnlineHinftyController}, the optimization problem \eqref{op-pr_on} is formulated as
\begin{mini!}|s|
    {\eta_k,\alpha_k,\beta_k,S_k,\Gamma_k}{\eta_k\label{op-ob}}{\label{op-pr_on}}{}
    \addConstraint{\begin{bmatrix}
        -\eta_k & x(k)^\top\\
        x(k) & -\Gamma_{k}
    \end{bmatrix} \leq 0,}\label{eq-bound}{}
    \addConstraint{\mathcal{M}_k-\alpha_k \mathcal{N}_k-\beta_k \mathcal{N}_k^h < 0,}\label{eq-dVwN_on}{}
\end{mini!}
where $\eta_k\geq0$, $\alpha_k\geq 0$ and $\beta_k\geq 0$, $S_k\in\mathbb{R}^{m\times n}$, $\Gamma_k\in\mathbb{R}^{n\times n}$ is a positive definite matrix, and
\begin{align}
    \mathcal{M}_k&=\text{diag}(Q_k,R_k),\nonumber\\
    \mathcal{N}_k^h&=\left\{
    \begin{array}{ccl}
        0, & & k=0,\\
        \alpha_{k-1} \mathcal{N}_{k-1}+\beta_{k-1} \mathcal{N}_{k-1}^h, & & k\geq 1,  
    \end{array}
        \right.\nonumber
\end{align}
with
\begin{align}
    Q_k &= \begin{bmatrix}
        -\Gamma_k\!+\!\frac{1}{\gamma^2} E_tE_t^\top & \frac{1}{\gamma^2} E_tG_t^\top\\
        \frac{1}{\gamma^2} G_tE_t^\top & -I_p+\frac{1}{\gamma^2} G_tG_t^\top   
    \end{bmatrix},\nonumber\\
    R_k & = \begin{bmatrix}
        \Gamma_k & S_k^\top & 0\\
        S_k & 0 & S_k\\
        0 & S_k^\top & -\Gamma_k    
    \end{bmatrix}.\nonumber
\end{align}

The following theorem summarizes the results regarding the stability of the closed-loop system (\ref{eq-system}) under the controller obtained through Algorithm \ref{al-OnlineHinftyController}.
\begin{theorem}\label{th-on_hinftyControl}
Consider the system (\ref{eq-system}) under Assumptions \ref{as-as1}-\ref{as-as3}, and Algorithm \ref{al-OnlineHinftyController}. Given an input-state-output trajectory $({u}_{[-L, -1]}, {x}_{[-L, 0]}, $ $y_{[-L, -1]})$ of \eqref{eq-system}, and an attenuation level $\gamma>0$, if there exist non-negative scalars $\eta_0$, $\alpha_0$, and $\beta_0$, a matrix $S_0\in\mathbb{R}^{m\times n}$, and a positive definite matrix $\Gamma_0\in\mathbb{R}^{n\times n}$ at the initial time step, i.e., $k=0$ such that the optimization problem (\ref{op-pr_on}) is feasible, then the problem \eqref{op-pr_on} is feasible at each time $k\in\mathbb{N}$. Furthermore, the state feedback controller ${u}(k)=F_k x(k)$ with $F_k=S_k\Gamma_k^{-1}$ asymptotically stabilizes the system (\ref{eq-system}) when $\omega(k)=0$, and meanwhile fulfills the $H_\infty$ performance requirement (\ref{eq-conditionH}) when the system has a zero initial state, i.e., $x(0)=0$.
\end{theorem}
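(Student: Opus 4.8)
The plan is to establish Theorem~\ref{th-on_hinftyControl} in three steps: (a) recursive feasibility of the problem \eqref{op-pr_on}; (b) showing that, at every time $k$, the accepted iterate is an admissible point of the static multi-dataset LMI \eqref{eq-dVwN} of Lemma~\ref{le-controller} for the datasets $N_0,\dots,N_k$, hence yields a valid dissipation certificate for the true system; and (c) a telescoping argument over the epigraph scalar $\eta_k$ that converts the step-varying certificates into the claimed nominal stability and $H_\infty$ bound. Step (c) is the delicate one, because the implemented gain $F_k$ and its Lyapunov matrix $\Gamma_k^{-1}$ change at every time step, so Lemma~\ref{le-controller} cannot be invoked directly.

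For (a), suppose \eqref{op-pr_on} is feasible at time $k$ with solution $(\eta_k,\alpha_k,\beta_k,S_k,\Gamma_k)$. Since $\mathcal{N}_{k+1}^h=\alpha_k\mathcal{N}_k+\beta_k\mathcal{N}_k^h$ by construction, the choice $\alpha_{k+1}=0$, $\beta_{k+1}=1$, $S_{k+1}=S_k$, $\Gamma_{k+1}=\Gamma_k$ reduces \eqref{eq-dVwN_on} at time $k+1$ to $\mathcal{M}_k-\alpha_k\mathcal{N}_k-\beta_k\mathcal{N}_k^h<0$, which holds by the induction hypothesis; and because $x(k+1)$ is collected before the problem is solved and $\Gamma_{k+1}=\Gamma_k$ is positive definite, any $\eta_{k+1}\ge x(k+1)^\top\Gamma_k^{-1}x(k+1)$ satisfies \eqref{eq-bound} by a Schur complement. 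With feasibility at $k=0$ assumed, induction gives feasibility for all $k\in\mathbb{N}$; this is the same mechanism as in Theorem~\ref{th-base}(i).

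For (b) and (c), unrolling the recursion $\mathcal{N}_k^h=\alpha_{k-1}\mathcal{N}_{k-1}+\beta_{k-1}\mathcal{N}_{k-1}^h$ exactly as in the proof of Theorem~\ref{th-base}(ii) rewrites \eqref{eq-dVwN_on} at time $k$ as $\mathcal{M}_k-\sum_{\iota=0}^{k}\tau_\iota\mathcal{N}_\iota<0$ with $\tau_k=\alpha_k$ and $\tau_\iota=\alpha_\iota\prod_{j=\iota+1}^{k}\beta_j$ for $\iota<k$, all nonnegative. This is precisely \eqref{eq-dVwN} with $q$ replaced by $k+1$ and $(S,\Gamma)=(S_k,\Gamma_k)$, so by the $S$-procedure argument underlying Lemma~\ref{le-controller}, with $P_k=\Gamma_k^{-1}$ and $F_k=S_k\Gamma_k^{-1}$, every system in $\bigcap_{\iota=0}^{k}\Sigma_\iota=\Sigma_k^{on}$, in particular the true one, satisfies $x(k+1)^\top P_k x(k+1)-x(k)^\top P_k x(k)+\|y(k)\|_2^2-\gamma^2\|\omega(k)\|_2^2<0$ for $(x(k),\omega(k))\neq 0$, where $x(k+1),y(k)$ are the actual closed-loop signals \eqref{eq-clstatespace} produced by the gain $F_k$. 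For (c), constraint \eqref{eq-bound} forces $\eta_k\ge x(k)^\top P_k x(k)$, while the carryover iterate of step (a) shows that $(x(k+1)^\top P_k x(k+1),0,1,S_k,\Gamma_k)$ is feasible at time $k+1$, so minimality gives $\eta_{k+1}^\star\le x(k+1)^\top P_k x(k+1)$. Chaining these with the dissipation inequality yields $\eta_{k+1}^\star\le\eta_k^\star-\|y(k)\|_2^2+\gamma^2\|\omega(k)\|_2^2$; summing over $k=0,\dots,N$, using $\eta_{N+1}^\star\ge 0$ and $\eta_0^\star=0$ when $x(0)=0$ (by \eqref{eq-bound} and feasibility at $k=0$), and letting $N\to\infty$ gives \eqref{eq-conditionH}, the strictness being handled as in the proof of Lemma~\ref{le-controller}. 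Putting $\omega\equiv 0$ in the same chain makes $\{\eta_k^\star\}$ nonincreasing and bounded below with $\sum_k\|y(k)\|_2^2<\infty$, and $x(k)\to 0$ then follows from the strict Lyapunov decrease implied by the strict LMI, as in Lemma~\ref{le-controller}.

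The main obstacle is exactly step (c): Lemma~\ref{le-controller} certifies one static gain with one Lyapunov matrix, whereas here both change every step, so the per-step dissipation inequalities do not telescope on their own. The device that repairs this is the epigraph variable $\eta_k$ with constraint \eqref{eq-bound} and objective \eqref{op-ob}, which bridges consecutive distinct Lyapunov functions through the bound $\eta_{k+1}^\star\le x(k+1)^\top P_k x(k+1)$; the care needed is in arguing that the minimized $\eta_k^\star$ correctly brackets the true-trajectory value $x(k)^\top\Gamma_k^{-1}x(k)$ on both sides and that the strict inequalities survive the passage to the infinite-horizon sum and to the conclusion $x(k)\to 0$.
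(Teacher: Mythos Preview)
Your proposal is correct and follows essentially the same approach as the paper: recursive feasibility via the carryover point $(\alpha_{k+1},\beta_{k+1},S_{k+1},\Gamma_{k+1})=(0,1,S_k,\Gamma_k)$, unrolling \eqref{eq-dVwN_on} into the multi-dataset LMI \eqref{eq-dVwN} to import the per-step dissipation inequality from Lemma~\ref{le-controller}, and bridging the time-varying Lyapunov matrices through the optimized epigraph variable. The only cosmetic difference is that the paper works directly with $x(k)^\top P_k x(k)$ (noting that the optimal $\eta_k$ equals this quantity) to obtain $x(k+1)^\top P_{k+1}x(k+1)\le x(k+1)^\top P_k x(k+1)$ and then telescopes the Lyapunov-like function $\bar V(x(k))=x(k)^\top P_k x(k)$, whereas you telescope $\eta_k^\star$ itself; since $\eta_k^\star=x(k)^\top P_k x(k)$ at optimum, the two chains are identical.
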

\begin{proof}
    Following a similar proof to that of Theorem \ref{th-base}, we can establish that the problem \eqref{op-pr_on} remains feasible at each time $k\in\mathbb{N}$ if it is feasible at $k=0$. 
    
    Applying Schur complement\cite{1994Lmii} to (\ref{eq-bound}) gives
    \begin{align}
        x(k)^\top \Gamma_k^{-1} x(k)\leq \eta_k.\label{eq-bound_Pk}
    \end{align}
    
    It is straightforward to obtain that $(x(k+1)^\top\Gamma_k^{-1}x(k+1),$ $0, 1, S_k, \Gamma_k)$ is a solution for $(\eta_{k+1}, \alpha_{k+1}, \beta_{k+1}, S_{k+1}, \Gamma_{k+1})$ to the LMIs \eqref{eq-bound}-\eqref{eq-dVwN_on} at time $k+1$, $\forall k\in\mathbb{N}$.

    By solving the optimization problem (\ref{op-pr_on}) from time step $0$ to $k$, similar to the proof of Theorem \ref{th-offhinftyControl}, we can derive that the pair $(\alpha_0\prod_{j=1}^k\beta_j,$ $\dots,\alpha_{k-1}\beta_k,\alpha_{k},S_k,\Gamma_k)$ is a solution for $(\tau_0,\dots,\tau_{k-1},\tau_k,S,$ $\Gamma)$ with respect to the following LMI
    \begin{align}
        \mathcal{M}-\sum_{\iota=0}^{k}\tau_{\iota} \mathcal{N}_{\iota}< 0,\label{eq-dVwN_k}
    \end{align}
    where $\mathcal{M}$ is defined in \eqref{eq-Mdef}.

    According to the proof of Lemma \ref{le-controller}, we get
    \begin{align}
     &x(k+1)^\top P_k x(k+1)-x(k)^\top P_k x(k) \nonumber\\
     <&\gamma^2 \omega(k)^\top \omega(k)-y(k)^\top y(k),\label{eq-Gam}
    \end{align}
    where $P_k=\Gamma_k^{-1}$.
    
    The convexity of the objective function $\eta_k$, combined with the LMIs \eqref{eq-bound}-\eqref{eq-dVwN_on}, indicates that the optimization problem \eqref{op-pr_on} is convex and possesses a global optimization solution. Hence, according to \eqref{eq-bound_Pk}, $\eta_{k+1}$ is minimized at time $k+1$, which equals $x(k+1)^\top \Gamma_{k+1}^{-1} x(k+1)$. Since $x(k+1)^\top\Gamma_k^{-1}x(k+1)$ is a solution for $\eta_{k+1}$ to the LMIs \eqref{eq-bound}-\eqref{eq-dVwN_on} at time $k+1$, we have
    \begin{align}
        x(k+1)^\top \Gamma_{k+1}^{-1} x(k+1)\leq x(k+1)^\top\Gamma_k^{-1}x(k+1),\nonumber
    \end{align}
    which is equivalent to
    \begin{align}
    x(k+1)^\top P_{k+1}  x(k+1)\leq x(k+1)^\top P_k x(k+1).\label{eq-Gamx}
    \end{align}
    
    Then, combining (\ref{eq-Gam}) and (\ref{eq-Gamx}) gives 
    \begin{align}
     &x(k\!+\!1)^\top P_{k+1}  x(k\!+\!1)-x(k)^\top P_k x(k)\nonumber\\
     <&\gamma^2 \omega(k)^\top \omega(k)-y(k)^\top y(k).\label{eq-Gamx1}
    \end{align}
    
    Select the following Lyapunov function candidate 
    \begin{align}
        \bar{V}(x(k))=x(k)^\top P_k x(k).
    \end{align}
    
    When $\omega(k)=0$, the inequality (\ref{eq-Gamx1}) yields 
    \begin{align}
        \bar{V}(x(k+1))-\bar{V}(x(k))< -\begin{Vmatrix}
            y(k)
        \end{Vmatrix}_2^2.\label{eq-Gamx2x}
    \end{align} 
    Since $\bar{V}(x(k))>0$ for all $x(k)\neq 0$ and $\bar{V}(x(k+1))-\bar{V}(x(k))< 0$, the designed controller asymptotically stabilizes the system (\ref{eq-system}) when $\omega(k)=0$.
    
    Particularly, let the initial state $x(0) = 0$. The inequality (\ref{eq-conditionH}) can be derived by summing (\ref{eq-Gamx1}) from $k=0$ to $k=\infty$.
\end{proof}
\begin{remark}\label{re-opti_pro}
    Upon examining the proof of Theorem \ref{th-on_hinftyControl}, it can be observed that both the objective function \eqref{op-ob} and the LMI \eqref{eq-bound} are necessary to ensure the stability of the closed-loop system \eqref{eq-system} under the controller designed using Algorithm \ref{al-OnlineHinftyController}. According to the switched system theory\cite{Daniel03}, if we replace the optimization problem \eqref{op-pr_on} in Algorithm \ref{al-OnlineHinftyController} with the LMI \eqref{eq-dVwN_on}, it is possible for the closed-loop system to become unstable.
\end{remark}
\begin{remark}
    To implement Algorithm \ref{al-OnlineHinftyController}, we require a pre-collected dataset, i.e., $({u}_{[-L, -1]}, {x}_{[-L, 0]}, y_{[-L, -1]})$, and the initial state $x(0)$ to initiate the first iteration. The subsequent iteration is prepared by collecting online data to update both the data matrices \eqref{eq-Uk}-\eqref{eq-Yk} utilized in \eqref{eq-dVwN_on} and the state utilized in \eqref{eq-bound}.
\end{remark}
\begin{remark}
    In addition to constructing a new dataset using \eqref{eq-Uk}-\eqref{eq-Yk}, two other methods are provided as alternatives. Firstly, instead of fixing the data length $L$, one approach is to use datasets with different data lengths at each iteration. Secondly, instead of immediately using the collected data and updating one pair of data in the data matrices \eqref{eq-Uk}-\eqref{eq-Yk}, another approach is to store the collected data in a data buffer. Then, a sequence can be randomly selected from the buffer at each iteration to replace the corresponding data matrices. By considering these alternative methods, we can expedite the process of shrinking the intersection of system sets, i.e., $\Sigma_k^{on}$. This accelerated shrinking can potentially enhance the control performance of the designed controller.
\end{remark}
\begin{remark}
    Compared to the offline data-driven $H_\infty$ control method developed in Algorithm \ref{al-OfflineHinftyController}, the online data-driven $H_\infty$ control method developed in Algorithm \ref{al-OnlineHinftyController} exhibits two main differences. Firstly, the online method applies the designed controller to the system \eqref{eq-system} at each time step, whereas the offline method only applies the controller designed at the end of the iteration. Secondly, the online controller is designed by solving an optimization problem that ensures the stability of the closed-loop system, as discussed in Remark \ref{re-opti_pro}. In contrast, the offline controller only needs to solve an LMI problem.
\end{remark}

\section{Case Study}\label{sc-caseStudy}
This section demonstrates the effectiveness of the proposed methods by utilizing an unstable batch reactor\cite{WalshGC2001Sonc}. By setting the sampling time to $0.1$s \cite{Liu2023}, we obtain a discrete-time LTI model with the same form as the system \eqref{eq-system}. The system matrices are given by
\begin{align}
    A_t &= \begin{bmatrix}
        1.178  &  0.002  &  0.512  & -0.403\\
        -0.052  &  0.662  & -0.011  &  0.061\\
         0.076  &  0.335  &  0.561  &  0.382\\
        -0.001  &  0.335  &  0.089  &  0.849\\
     \end{bmatrix},E_t=[I_4,0],\nonumber\\
    B_t &= \!\begin{bmatrix}
        0.005  & -0.088\\
        0.467  &  0.001\\
        0.213  & -0.235\\
        0.213  & -0.016\\
    \end{bmatrix}\!, C_t=\!\!\begin{bmatrix}
        1  & 0 \\
        0  & 1 \\
        1  & 0 \\
       -1  & 0
    \end{bmatrix}^\top\!\!\!\!,D_t\!=\!0,G_t\!=\![0,I_2],\!\nonumber
\end{align}
where $x(k)\in\mathbb{R}^4$, $y(k)\in\mathbb{R}^2$, $u(k)\in\mathbb{R}^2$, and $\omega(k)\in\mathbb{R}^6$. The system matrices $A_t$, $B_t$, $C_t$, and $D_t$ are presumed to be unknown and only used to build the test system in the simulation. The disturbance $\omega(k)$ is bounded by $\omega(k)\omega(k)^\top\leq 0.0014I_6$. In addition, the initial state is set as $x_0=[0.51,$ $0.39,-0.30,-0.28]^\top$ in the simulation.

For comparison, we present the simulation results for the model-based $H_\infty$ controller\cite{Chang2014} and data-driven $H_\infty$ controllers designed using Lemma \ref{le-controller}, Algorithm \ref{al-OfflineHinftyController}, and Algorithm \ref{al-OnlineHinftyController}, respectively. In addition, we choose the attenuation level $\gamma=10$ for all controllers.

To implement data-driven controllers, we collect 100 length $T=8$ input-state-output trajectories of the batch reactor, denoted as ($u_{i,[0,7]}$, $x_{i,[0,8]}$, $y_{i,[0,7]}$), $i=1,\dots,100$, where inputs are randomly generated within the range of $[-0.1, 0.1]$. The first pair of trajectories, i.e., $i=1$, is used to design the offline controller using Lemma \ref{le-controller} with $q=1$. It also serves as the initial dataset for designing the offline data-driven $H_\infty$ controller using Algorithm \ref{al-OfflineHinftyController} and the online data-driven $H_\infty$ controller using Algorithm \ref{al-OnlineHinftyController}. Additionally, for the online controller, the data length of each dataset is set as $L=8$, which is the same as that of the two offline controllers.

The simulation results are illustrated in Fig. \ref{Fig.stateCom}. In this figure, the green and blue lines represent the outputs with the data-driven $H_\infty$ controllers (DDC) using Lemma \ref{le-controller} for $q=1$ and $q=100$, labeled as DDC with $q=1$ and DDC with $q=100$, respectively. The black lines are the outputs with the offline data-driven $H_\infty$ controller using Algorithm \ref{al-OfflineHinftyController}, labeled as Offline DDC. The red lines are the outputs with the online data-driven $H_\infty$ controller using Algorithm \ref{al-OnlineHinftyController}, labeled as Online DDC. Finally, the magenta dashed lines are the outputs generated by the model-based $H_\infty$ control (MBC) method\cite{Chang2014}.

Upon comparing the green lines with the blue lines, it is evident that the control performance of the DDC with multiple datasets is better than that of the DDC with a single dataset. Moreover, comparing the black lines with the blue lines reveals that the DDC designed by incrementally adding datasets achieves a control performance comparable to that of the DDC designed using all datasets at once. When comparing the blue lines with the magenta dashed lines, we can conclude that the DDC with multiple datasets can achieve control performance similar to that of the model-based controller despite the absence of explicit model information. 

Lastly, it is observed that the online DDC, represented by the red lines, initially demonstrates inferior control performance compared to the other control methods, except for the DDC with $q=1$. However, after a few time steps, the online DDC can achieve control performance comparable to the other methods. The reason is that the online DDC approach, where the controller at the first time step is designed using only one dataset, similar to the DDC with $q=1$. In subsequent time steps, new data is collected to construct new datasets, which are then utilized in the controller design process. In addition, despite both the online DDC at the first time step and the DDC with $q=1$ using a single dataset for controller design, the former exhibits improved control performance. This improvement can be attributed to the fact that the online controller is designed by solving an optimization problem rather than an LMI problem.

\begin{figure}[htbp]
    \centering
    \includegraphics[width=0.98\linewidth]{./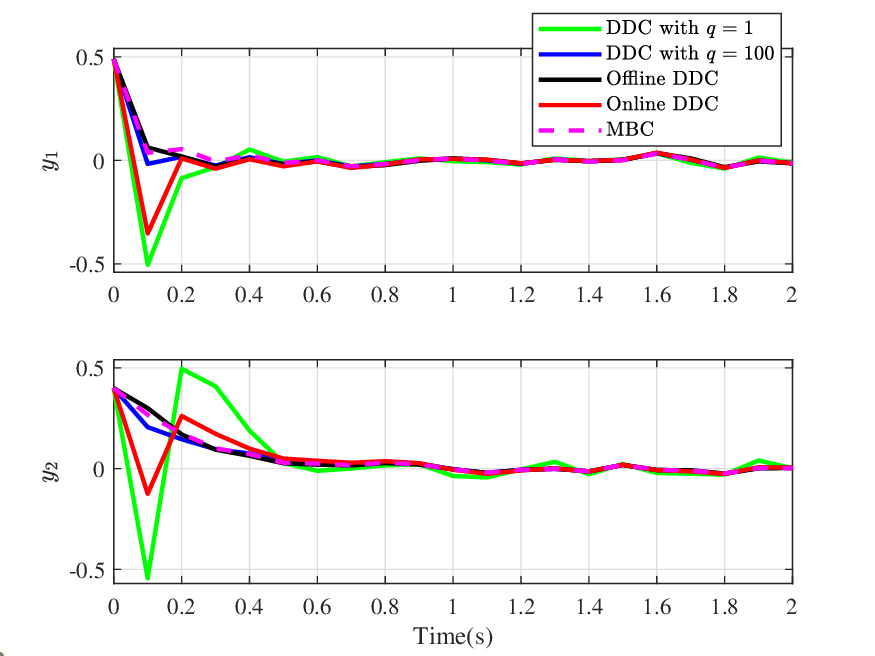}
    \caption{The outputs from DDC with $q=1$ and $q=100$, offline DDC, online DDC, and MBC, respectively.}
    \label{Fig.stateCom}
\end{figure}


\section{Conclusion}\label{sc-conclusion}
This paper has developed a new iterative method to solve the LMI with multiple datasets. Based on this method, two new data-driven $H_\infty$ control methods have been developed for LTI systems. The first is the offline control method, which only applies the controller designed at the end of the iteration to the controlled system. The second is the online control method, which immediately applies the designed controller to the controlled system at each iteration, and the online data are continuously collected to construct new datasets. Finally, an unstable batch reactor has been used to demonstrate the effectiveness of the proposed control methods.

\appendix
\subsection{Proof of Lemma \ref{le-controller}}\label{ap-pf_lemma3}
Select the following Lyapunov function candidate
\begin{align}
V({x}(k))={x}(k)^\top P {x}(k),\nonumber
\end{align}
where $P=\Gamma^{-1}$. Let $\Delta V({x}(k))=V(x(k+1))-V(x(k))$. Then, we have
\begin{align}
\Delta V({x}(k)) = {x}(k+1)^\top P {x}(k+1) - {x}(k)^\top P {x}(k).\label{eq-dV}
\end{align}
Substituting (\ref{eq-clstatespaceia}) into (\ref{eq-dV}) gives
\begin{align}
\Delta V({x}(k)) =\begin{bmatrix}
{x}(k)\\
\omega(k)
\end{bmatrix}^\top\begin{bmatrix}
\bar{A}^\top P \bar{A}-P & \bar{A}^\top P E_t\\
E_t^\top P \bar{A} & E_t^\top P E_t
\end{bmatrix}\begin{bmatrix}
{x}(k)\\
\omega(k)
\end{bmatrix}.\label{eq-dV_d}
\end{align}

Applying Schur complement\cite{1994Lmii} to (\ref{eq-dVwN}) gives
\begin{align}
M-\sum_{i=1}^{q}\tau_i N_i< 0, \label{eq-MtauN}
\end{align}
where $N_i$ is defined right after \eqref{eq-condition_off}, and
\begin{align}
    M=\begin{bmatrix}
        -\Gamma\!+\!\frac{1}{\gamma^2} E_t E_t^\top& \frac{1}{\gamma^2} E_t G_t^\top & 0 & 0\\
        \frac{1}{\gamma^2} G_t E_t^\top & -I_p+\frac{1}{\gamma^2} G_t G_t^\top & 0 & 0\\
          0 &  0 & \Gamma & S^\top\\
          0 &  0 & S      & S\Gamma^{-1} S^\top
        \end{bmatrix}.\nonumber
\end{align} 

Applying Lemma \ref{le-slemma_v} to (\ref{eq-MtauN}) yields that
\begin{align}
    [I_{n+p}, Z] M [I_{n+p}, Z]^\top< 0 \label{eq-Mcon}
\end{align}
holds for all $Z$ such that $[I_{n+p}, Z] N_i [I_{n+p}, Z]^\top \leq 0$, $\forall i=1,\dots,q$. It means that (\ref{eq-Mcon}) holds for all systems in $\Sigma^{off}$.

Substituting $Z$ in (\ref{eq-ZGdef}) and $F=S\Gamma^{-1}$, i.e., $S=F\Gamma$, into (\ref{eq-Mcon}) yields 
\begin{align}
    &\begin{bmatrix}
        -\Gamma\!+\!\frac{1}{\gamma^2} E_t E_t^\top& \frac{1}{\gamma^2} E_t G_t^\top\\
        \frac{1}{\gamma^2} G_t E_t^\top & -I_p+\frac{1}{\gamma^2} G_t G_t^\top 
    \end{bmatrix}\nonumber\\
    &+\begin{bmatrix}
        A & B\\
        C & D
    \end{bmatrix} \begin{bmatrix}
        \Gamma  & \Gamma F^\top\\
        F\Gamma & F\Gamma F^\top
    \end{bmatrix}\begin{bmatrix}
        A & B\\
        C & D
    \end{bmatrix}^\top<0,\nonumber
\end{align}
or equivalently,
\begin{align}
    &\begin{bmatrix}
        -\Gamma & 0\\
        0 & -I_{p}\\
    \end{bmatrix}+\begin{bmatrix}
        E_t \\
        G_t
    \end{bmatrix}
    \frac{1}{\gamma^2}I_{r}
    \begin{bmatrix}
        E_t \\
        G_t
    \end{bmatrix}^\top\nonumber\\
    &+\begin{bmatrix}
        A & B\\
        C & D
    \end{bmatrix} \begin{bmatrix}
        I_n\\
        F
    \end{bmatrix}
        \Gamma   
    \begin{bmatrix}
        I_n\\
        F
    \end{bmatrix}^\top  \begin{bmatrix}
        A & B\\
        C & D
    \end{bmatrix}^\top<0.\label{eq-Mbar}
\end{align}

From (\ref{eq-clstatespacei}), (\ref{eq-Mbar}) can be rewritten as
\begin{align}
    \!&\begin{bmatrix}
        -\Gamma & 0\\
        0 & -I_{p}\\
    \end{bmatrix}\!-\!\!\begin{bmatrix}
        \bar{A} \!&\! E_t \\
        \bar{C} \!&\!  G_t
    \end{bmatrix}\!\!
    \begin{bmatrix}
        -\Gamma \!&\! 0 \\
        0 \!&\! -\frac{1}{\gamma^2}I_{r}
    \end{bmatrix} \!\!
    \begin{bmatrix}
        \bar{A} \!&\! E_t \\
        \bar{C} \!&\!  G_t\
    \end{bmatrix}^\top\!\!\!<\! 0.\!\!\label{eq-phipsidxx}
\end{align}

Since $\text{diag}(-\Gamma,-\frac{1}{\gamma^2}I_{r})<0$, applying Schur complement to (\ref{eq-phipsidxx}) gives
\begin{align}
    \begin{bmatrix}
        \begin{array}{cc:ccc}
            -P & 0 & \bar{A}^\top & \bar{C}^\top \\
            0 & -\gamma^2 I_{r} & E_t^\top & G_t^\top\\
            \hdashline
            \bar{A} & E_t & -\Gamma & 0\\
            \bar{C} \!&\!  G_t & 0 & -I_{p}\\
        \end{array}
    \end{bmatrix} < 0. \label{eq-phipsidxx1}
\end{align} 

Since $\text{diag}(-\Gamma,-I_p)<0$, applying Schur complement to (\ref{eq-phipsidxx1}) gives
\begin{align}
    \begin{bmatrix}
    \bar{A}^\top P \bar{A}-P+\bar{C}^\top \bar{C} & \bar{A}^\top P E_t+\bar{C}^\top G_t\\
    E_t^\top P \bar{A}+G_t^\top \bar{C} & E_t^\top P E_t+G_t^\top G_t-\gamma^2 I_{r}
\end{bmatrix}\!\!<\!0,\nonumber
\end{align}
which is equivalent to 
\begin{align}
    \xi_k^\top\!\begin{bmatrix}
        \bar{A}^\top P \bar{A}-P+\bar{C}^\top \bar{C} & \bar{A}^\top P E_t+\bar{C}^\top G_t\\
    E_t^\top P \bar{A}+G_t^\top \bar{C} & E_t^\top P E_t+G_t^\top G_t-\gamma^2 I_{r}
\end{bmatrix}\xi_k\!<\!0\!\label{eq-phipsid}
\end{align}
for all non-zero $\xi_k=[{x}(k)^\top, \omega(k)^\top]^\top$. 

Combining (\ref{eq-dV_d}) with (\ref{eq-phipsid}) yields
\begin{align}
\Delta V(x(k))+\xi_k^\top\begin{bmatrix}
\bar{C}^\top \bar{C} & \bar{C}^\top G_t\\
G_t^\top \bar{C} & G_t^\top G_t-\gamma^2 I_{r}
\end{bmatrix}\xi_k< 0.\label{eq-deltaVp}
\end{align}
Substituting (\ref{eq-clstatespaceib}) into (\ref{eq-deltaVp}) gives
\begin{align}
\Delta V(x(k))< -\|y(k)\|_2^2 + \gamma^2 \|\omega(k)\|_2^2.\label{eq-deltaVx}
\end{align}

When $\omega(k)=0$, then $\Delta V(k) < -y(k)^\top y(k) \leq 0$. Thus, the proposed controller asymptotically stabilizes the nominal system of (\ref{eq-system}).

When $x(0)=0$, summing (\ref{eq-deltaVx}) from $k=0$ to $k=\infty$ derives (\ref{eq-conditionH}).

\section*{References}
\bibliographystyle{IEEEtran}
\bibliography{hukaijian}

\end{document}